\newtheorem{theorem}{Theorem}
\newtheorem*{theorem*}{Theorem}
\newaliascnt{lemma}{theorem}
\newaliascnt{corollary}{theorem}
\newaliascnt{claim}{theorem}
\newaliascnt{observation}{theorem}
\newaliascnt{definition}{theorem}
\newtheorem{lemma}[lemma]{Lemma}
\newtheorem{observation}[observation]{Observation}
\newtheorem{claim}[claim]{Claim}
\newtheorem{corollary}[corollary]{Corollary}
\newtheorem{definition}[definition]{Definition}
\newcommand\R{\mathbb{R}}
\newcommand{\Infer}{\Longrightarrow}
\newcommand{\ud}{\mathrm{d}}
\newcommand{\area}{\mathsf{area}}
\newcommand{\core}{\mathsf{core}}
\newcommand{\tail}{\mathsf{tail}}
\newcommand{\btau}{\boldsymbol{\tau}}
\renewcommand{\t}{\boldsymbol{t}}
\newcommand{\p}{\boldsymbol{p}}
\renewcommand{\H}{\mathcal{H}}
\renewcommand{\P}{\mathcal{P}}
\begin{document}
\title{The Matthew Effect in Computation Contests: \\
High Difficulty May Lead to 51\% Dominance}
\author{
Yulong Zeng\\
Nebulas\\
\texttt{yulong.zeng@nebulas.io}
\and
Song Zuo\\
Google Research\\
\texttt{szuo@google.com}
}
\date{}

\maketitle

\begin{abstract}
  We study the computation contests where players compete for searching a solution
to a given problem with a winner-take-all reward. The search processes are
independent across the players and the search speeds of players are proportional
to their computational powers. One concrete application of this abstract model
is the mining process of proof-of-work type blockchain systems, such as Bitcoin.
Although one's winning probability is believed to be proportional to his
computational power in previous studies on Bitcoin, we show that it is not the
case in the strict sense.

Because of the gaps between the winning probabilities and the proportions of
computational powers, the Matthew effect will emerge in the system, where the
rich get richer and the poor get poorer. In addition, we show that allowing the
players to pool with each other or reducing the number of solutions to the
problem may aggravate the Matthew effect and vice versa.

\end{abstract}

\section{Introduction}\label{sec:intro}

In a computation contest, a group of players compete with each other on solving
a given computational problem and only the player who first finds a solution to
this problem will receive a fixed reward. To solve the problem, the most
efficient way is to execute a random brute force search across the candidate
set, so the computational power is a critical resource for each player to
increase the probability of winning this contest. Practical examples abound,
ranging from tuning the parameters for a deep neural network, selecting random
seeds for reinforcement learning to recovering the spatial structure of proteins
and discovering new drugs for cancers \citep{cooper2010predicting,mnih2015human,horowitz2016determining,silver2017mastering,silver2017zero}.

The mining process of {\em Bitcoin} \citep{nakamoto2008bitcoin} is a concrete
example that perfectly matches the computation contest in terms of mathematics.
The Bitcoin system is maintained as a chain of blocks and it awards the builder
of each block by a certain amount of BTC, which is about $12.5$ BTC ($\approx
79,400$ USD) at the time this paper is written. To build (or mine) a block, one
must find a block header (a fixed-length bit string) whose hash value meets a
certain criteria and announce it before other competitors. Because it is hard to
invert the secure hash function, the most efficient way is to enumerate (in an
arbitrary order) and verify the candidate block headers. This is exactly the
case as we described in the computation contest and the mining process is now an
extremely fierce competition --- the total computational power of the community
is roughly $48$E ($4.8 \times 10^{19}$) hash per second.\footnote{See \url{https://www.blockchain.com/en/charts/hash-rate}.}

To participate such a vast contest, understanding the winning probability or the
expected reward is undoubtedly important to the players. So far, the winning
probability of each player is believed to be proportional to his computational
power (or his speed of verifying one candidate) \citep{fisch2017socially}. Their
argument is briefly as follows: For each player, the wins follow a Poisson
process with a strength proportional to his computational power and hence the
expected reward is also proportional to his computational power.

However, in the strict sense, this conclusion is incorrect due to the nature of
draws without replacement in the computation contests: Conditioned on no
solution has been found yet, the more candidates one has already checked, the
higher probability the next candidate will be a solution. To see this
concretely, take the following extreme case as an example. There are only two
players with computational powers being $2$ and $1$. When there is only one
solution to the given problem, the winning probabilities of them are in fact
$3/4$ and $1/4$, instead of $2/3$ and $1/3$. In \autoref{sec:single}, we will
demonstrate the analysis in details. In general, our analysis shows that the gap
between one's winning probability and the proportion of his computational power
increases as the number of solutions of the problem decreases.

Such gaps, if not negligible, will lead to the Matthew effect in the system
where the players repeatedly participate the computation contest. In practice,
the reward received now often results in increments of computational powers in
the future, for example, research groups can attract more researcher and funds
by their results and Bitcoin miners can purchase more machines using BTCs. The
gaps, as we will show in this paper, offer higher increments for those with
relatively high computational powers and makes {\em the rich (relatively) richer
and the poor (relatively) poorer}. Eventually, the entire system could be
dominated by a single player, which is undesirable in most applications. The
converging speed could be accelerated if the players are allowed to form groups
or {\em pools} where they can jointly search for the solution and split the
rewards.

Our analysis of the computation contest then suggests that keeping the number of
solutions large is critical to suppress the Matthew effect in the system.

\subsection{Our contribution}
In this paper, we make the following contributions, where all the results apply
to general multiple-solution cases:
\begin{itemize}
  \item We give the winning probabilities of the players in closed-form, both
        for the single-solution case (\autoref{thm:winning-prob}) and the
        multiple-solution case (\autoref{thm:multi-winning-prob}).
  \item In general, the probabilities are {\em not} proportional to the
        computational powers of the players. In extreme cases, this may lead to
        {\em the Matthew effect} in terms of computational powers: ``the rich
        get richer and the poor get poorer''.
  \item We introduce a pool choosing game between the players, where each of
        them can either specify one pool to join or keep himself independent. We
        analyse the incentives of the players under this game
        (\autoref{sec:pooling}) and identify a non-trivial Nash equilibrium
        (\autoref{thm:equilibrium}).

        In particular, in the Nash equilibrium, there is always a pool (or a
        single player) who has more than half of the total computational power.
        It implies that the Matthew effect could become even stronger when
        pooling is allowed. For the application of Bitcoin, it also means the
        $51\%$ attack\footnote{The $51\%$ attack is a concept in PoW blockchain
        systems \citep{tschorsch2016bitcoin}. It states that if one player (or a
        pool) has computational power more than half of the total computational
        power in the system, then he/she is able to manipulate the blockchain.}
        could emerge in this case.
  \item Last but not least, we show that increasing the number of solutions will
        mitigate the Matthew effect. This is because as the number of solutions
        grows, the winning probabilities will gradually get close to being
        proportional to computational powers, where the gap between the
        probabilities and the proportions can be roughly bounded by the order of
        $n^{-1}$ (\autoref{thm:probmulti}).
\end{itemize}

\subsection{Related work}
The blockchain and Bitcoin are first presented in 2008 under the pseudonym
\citet{nakamoto2008bitcoin} and widely concerned in the past decade. Thousands
of papers follow the work by \citet{nakamoto2008bitcoin} with research areas
including cryptography, game theory, and distributed systems. The book
``Mastering Bitcoin'' \citep{antonopoulos2014mastering} introduces the basic
knowledge about Bitcoin for beginners and \citet{tschorsch2016bitcoin}
summarize some early studies and progresses of the Bitcoin techniques.

The game-theoretical analysis of the mining systems, where each miner performs
as a self-interested player to maximize his reward, is one of the central topics
of blockchain researches and most relevant to this work.
\citet{rosenfeld2011analysis} introduces the attack called ``pool-hopping'',
where the miner can work in one pool at the beginning of one round and hop to
another pool when the current round is longer than a threshold, and analyzes the
advantages and disadvantages under several reward mechanisms of mining pools.
\citet{eyal2018majority} introduce the attack called ``selfish mining''. In such
an attack, when a miner successfully mines a block, he can choose to not publish
it and continue mining based on that block, called the private chain. Results
show that a rational miner will follow the selfish mining strategy as long as
the selfish miner controls at least $1/3$ of the total computational power.
\citet{kiayias2016blockchain} consider a mining game with similar attacks, where
the model structure is a tree of blocks and a block becomes valid when its
height is larger than a given threshold. \citet{eyal2015miner} proposes an
attack that a mining pool can send some ``spies'' to other pools and share the
utilities of the spies obtained from these pools. \citet{liu2018evolutionary}
consider how a miner should choose a mining pool of which the required
computational power is fixed. \citet{fisch2017socially} show that the geometric
pay pool achieves the optimal social welfare based on the power utility
function.

In the famous economic model, R$\&$D race
\citep{harris1987racing,canbolat2012stochastic}, a similar setting is analyzed,
where each agent can choose his effort rate to the winner-takes-all competition.
However, the expected profits are still assumed to be linear to the agents'
effort rates with additional costs. To the best of our knowledge, no literature
drops the assumption about the proportional winning probability and this paper
is the first to mathematically consider the winning probabilities in first
principles.

\section{Preliminaries}\label{sec:prelim}

Consider the following model of an abstract computation contest where the
problem is solved via random trials:
\begin{itemize}
  \item $m$ players compete with each other in solving a problem;
  \item The problem is to find a solution of $h(v) = 0$ from a candidate set
        $V$, where $|V| = N$ is finite;
  \item There are $n \geq 1$ solutions to the problem, but only the player who
        first finds one solution receives a reward $1$.
\end{itemize}
In particular, there is no trick for solving the problem in the sense that
enumerating and verifying all candidates from $V$ in an arbitrary order is the
best algorithm. Formally, let $h$ be randomly drawn from a ground set $\H$
according to some distribution, such that for any $v \in V$, the event $h(v) =
0$ is independent of the function values of any subset of $V \setminus \{v\}$.
In addition, the probability of $h(v) = 0$ is the same for any $v \in V$ and the
computational resource needed to determine whether $h(v) = 0$ or not is the also
the same for any $v \in V$, both independent of the knowledge about the function
values on $V \setminus \{v\}$.\footnote{Although the assumption here can hardly
be exactly met in real applications, our analysis still applies at some extend.
For example, in the case of blockchain under PoW schemes, the hash function $h$
is fixed and so are its solutions. However, as long as the hash function is
still secure, the algorithmic advantage of solving the problem is negligible and
every one is searching in a random order. As a result, the probability of the
$t$-th candidate $v_t$ being a solution is still a constant.}

Throughout the paper, we refer each $v \in V$ as a {\em candidate} and the ones
such that $h(v) = 0$ as {\em solutions}. In the contest, each player $i$ has a
type $x_i > 0$ called the {\em computational power}, i.e., the number of
candidates he can verify in a unit of time. The player with higher computational
power will be able to test more candidates and of course has higher chance to
win. Here, we assume that each player is enumerating through the candidate set
in a random order independent of others.

\paragraph{Blockchain}
One concrete real-world example of this computation contest is the application
of blockchains under Proof of Work (PoW) schemes. As its name would suggest, a
{\em blockchain} is a chain of many blocks that are accepted by the peers, where
each accepted block must have a valid {\em block header} pointing to the
previous block.

Taking Bitcoin as the example: Each block header has a fixed length of $80$
bytes ($640$ bits) which in general could be divided into two parts: (i) the
fixed part (e.g., the hash value of the previous block header) and (ii) the
adjustable part (e.g., the {\em nonce} and the {\em merkle root}). In other
words, there is a fixed and finite set $V$ containing all block header
candidates.

A block header is valid if its hash value (a $256$-bit integer) is less than a
target threshold, i.e., $2^{208}(2^{16} - 1) / D \approx 2^{224} / D$, where $D$
is a positive number called {\em difficulty}. In this case, the probability of a
randomly selected block header being valid is $2^{224} / D \cdot 2^{-256} =
2^{-32}D$. In the language of our computation contest model, the function value
$h(v)$ of a candidate block header $v$ is $0$, if and only if $v$ is a valid
block header.

Since the hash function is deterministic, for any specific block, the number of
valid block headers (solutions), $n$, is fixed (but unknown). Note that the
fixed part of block header is randomly determined by the previous block and the
system, therefore it can be seen as the random index of the function $h$, given
that the hash function is ideal.

The main computational challenge of building the next block is to find one of
the $n$ valid block headers from the $N$ possible candidates, or equivalently,
solving $h(v) = 0$. In particular, if the hash function resists preimage
attacks, the most efficient algorithm one can hope for is the brute force
search (in arbitrary orders). In the context of blockchain, the procedure
of searching a valid block header is referred as {\em mining} and the speed of
computing the hash function is often referred as the {\em hash rate} or the
computational power.

In the classic PoW scheme, such as Bitcoin, every miner has the equal right to
mine and the one who first announces the next block with a valid block header
will win the reward on that block. The reward is a constant with respect to the
mining procedure. Therefore, the mining process of each block could be
well-modeled by our computation contest, where each miner corresponds to a
player, block headers correspond to candidates, and valid block headers
correspond to solutions.

\paragraph{Pooling}
In practice, the probability of winning the contest could be extremely low for a
single player and hence the variance of the random reward is very high. To avoid
the high variance, it is a common practice that individuals form a {\em pool}
and jointly search for the solution. Once the pool wins, the reward will be
distributed to the players in the pool according to a pre-specified policy.
The most common policy, in the case of Bitcoin, is essentially a proportional
allocation rule, i.e., the reward a player receives is proportional to the
computational power he contributes to the pool.\footnote{As recorded on
\url{https://investoon.com/mining_pools/btc}, the most common reward schemes in
the pools are PROP, PPS, PPLNS, all with expected reward approximately
proportional to their hashrates.} In the same spirit, work by
\citet{bloch1996sequential} analyzes sequential coalition game, splitting the
whole party into subsets, and studies subgame perfect equilibrium.

\medskip
In this paper, we focus on two types of questions under this computation contest
model:
\begin{itemize}
  \item Given the environment and the computational powers, what are the winning
        probabilities of the players? Are the probabilities proportional to
        their computational powers?
  \item If not proportional to their computational powers, what is the optimal
        pooling strategy for the players?
\end{itemize}

\section{The Single-Solution Case}\label{sec:single}

In this section, we start with the basic case with only one solution, i.e., $n =
1$. Later in \autoref{sec:multi}, we will generalize all results for
single-solution cases to multiple-solution cases.

\subsection{Warmup: two players and a single solution}\label{ssec:2single}

As defined in the previous section, the player who find its first solution
earliest wins. To compute the winning probabilities of the players, we use real
numbers in $[0, 1]$ to index the candidates in the order that they are verified
by player $i$.\footnote{Here we simplify the analysis by using real numbers to
index the candidates. One can of course use the fractions $k / N$ with integers
$k < N$ to index the candidates for an even more rigorous analysis. In this
case, our analysis still applies by replacing the integrations with summations.
The only difference here is the break-ties at $t_i / x_i = t_j / x_j$. Their
probabilities are zero under real number indexing but at order $1/N$ for
fraction number indexing, which is also negligible for large $N$.} Since each
candidate in general will have different indices in the enumerating sequences of
different players, putting the indices together, a {\em coordinate} of each
candidate is defined. In particular, let $t_i$ be the index of the first
solution in player $i$'s sequence and then $t_i / x_i$ is the time that player
$i$ first finds a solution. In this case, the player with minimum $t_i / x_i$
wins. Note that the $t_i$'s are random variables and also the source of the
randomness of the outcome.

Let $p_i$ be the winning probability of player $i$, we have
\begin{align*}
  p_i = \Pr[i\text{ wins}]
    = \Pr\left[t_i / x_i \leq \min_{j \neq i} t_j / x_j\right].
\end{align*}

\begin{figure}
  \centering
  \includegraphics[width=0.28\textwidth]{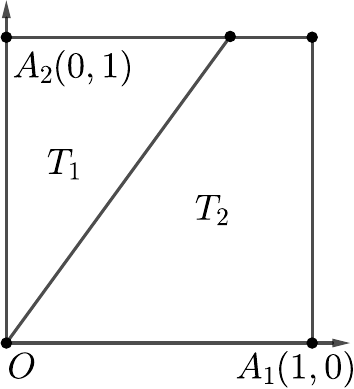}
  \caption{In the two-player case, player $1$ wins if and only if the random
  coordinate $\t$ falls into $T_1$.}\label{fig:2miner}
\end{figure}

For the two-player single-solution case, since there is only one solution, $t_1$
and $t_2$ are independently and uniformly distributed over $[0, 1]$. Denote $\t
= (t_1, t_2)$ and let $T_1$ be the set of $\t$ where player $1$ wins:
\begin{align*}
  T_1 = \{\t \in [0, 1]^2 : t_1 / x_1 \leq t_2 / x_2\}.
\end{align*}
Similarly, we can define $T_2$:
\begin{align*}
  T_2 = \{\t \in [0, 1]^2 : t_2 / x_2 \leq t_1 / x_1\}.
\end{align*}

Because $\t$ is uniformly distributed over $[0, 1]^2$, $p_1 = \int_{T_1} \ud \t
= \area(T_1)$ equals to the area of set $T_1$ (see \autoref{fig:2miner}).
Suppose that $x_1 \leq x_2$, then with some primary calculation, we know that
the slope of the line separating $T_1$ and $T_2$ is $x_2 / x_1$ and hence
\begin{align*}
  p_1 = \area(T_1) = x_1 / 2x_2,~
  p_2 = \area(T_2) = 1 - x_1 / 2x_2,
\end{align*}
which are not proportional to the computational power $x_1$ and $x_2$ (unless
$x_1 = x_2$).

\subsection{$m$ players and a single solution}\label{ssec:nsingle}

  For the general $m$-player case, the winning probability of each player is
  summarized by the following theorem, where $\pi(z)$ and $\pi_{-i}(z)$ are
  defined to simplify the notations:
  \begin{align*}
    \pi(z) = \prod_{i = 1}^m \left(1 - \frac{x_i}{x_m} \cdot z\right), \quad
    \pi_{-i}(z) = \frac{\pi(z)}{1 - \frac{x_i}{x_m} \cdot z}.
  \end{align*}

\begin{theorem}\label{thm:winning-prob}
  Suppose that $x_1 \leq x_2 \leq \cdots \leq x_m$, then the winning probability
  of player $i$ is:
  \begin{align*}
    p_i = \frac{x_i}{x_m}\int_0^1 \pi_{-i}(z) \ud z.
  \end{align*}
\end{theorem}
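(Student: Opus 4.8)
The plan is to compute $p_i$ directly by conditioning on where the unique solution sits in player $i$'s search sequence. Since there is a single solution, for every player $j$ the index $t_j\in[0,1]$ of that solution in $j$'s enumeration is uniformly distributed on $[0,1]$, and because the players enumerate independently the variables $t_1,\dots,t_m$ are mutually independent; the tie set $\{t_i/x_i = t_j/x_j\}$ has Lebesgue measure zero, so ``$i$ wins'' is well defined almost surely. First I would condition on $t_i = s$: the winning event $t_i/x_i \le t_j/x_j$ for all $j\neq i$ becomes $t_j \ge (x_j/x_i)s$ for all $j\neq i$, and by independence
\begin{align*}
  \Pr[i \text{ wins} \mid t_i = s]
    = \prod_{j\neq i}\Pr\!\left[t_j \ge \tfrac{x_j}{x_i}s\right]
    = \prod_{j\neq i}\max\!\left(0,\,1 - \tfrac{x_j}{x_i}s\right),
\end{align*}
so that $p_i = \int_0^1 \prod_{j\neq i}\max\!\bigl(0,\,1 - \tfrac{x_j}{x_i}s\bigr)\,\ud s$.

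Next I would eliminate the truncation. Because $x_m$ is the largest power, for $i\neq m$ the factor that vanishes first as $s$ increases is the $j=m$ factor, which hits zero at $s = x_i/x_m \le 1$, and the whole product is zero for $s$ beyond that point; conversely, for $s \le x_i/x_m$ we have $\tfrac{x_j}{x_i}s \le \tfrac{x_j}{x_m}\le 1$ for every $j$, so none of the $\max(0,\cdot)$ clips are active and the integrand equals the polynomial $\prod_{j\neq i}\bigl(1 - \tfrac{x_j}{x_i}s\bigr)$. (The case $i=m$ is identical with upper limit $1$, since then $x_j/x_m\le 1$ already guarantees nonnegativity on all of $[0,1]$.) Hence $p_i = \int_0^{x_i/x_m} \prod_{j\neq i}\bigl(1 - \tfrac{x_j}{x_i}s\bigr)\,\ud s$. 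The substitution $s = \tfrac{x_i}{x_m}z$ then rescales the domain to $[0,1]$, turns each factor into $1 - \tfrac{x_j}{x_m}z$, and contributes a Jacobian $\tfrac{x_i}{x_m}$, giving $p_i = \tfrac{x_i}{x_m}\int_0^1 \prod_{j\neq i}\bigl(1 - \tfrac{x_j}{x_m}z\bigr)\,\ud z = \tfrac{x_i}{x_m}\int_0^1 \pi_{-i}(z)\,\ud z$, since $\prod_{j\neq i}\bigl(1 - \tfrac{x_j}{x_m}z\bigr)$ is exactly $\pi_{-i}(z) = \pi(z)/(1 - \tfrac{x_i}{x_m}z)$.

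I do not expect a serious obstacle here; the two points that need care are (i) justifying that each $t_j$ is uniform on $[0,1]$ and that the $t_j$ are independent — this is where the ``random enumeration order'' modeling assumption and the real-number indexing of \autoref{ssec:2single} are invoked, with the tie configurations forming a measure-zero set — and (ii) the truncation bookkeeping, i.e.\ verifying that $x_i/x_m$ is precisely the right upper limit so that the unwieldy $\max(0,\cdot)$ integrand collapses to a clean product of linear factors. As a sanity check, specializing to $m=2$ reproduces $p_1 = \tfrac{x_1}{x_2}\int_0^1(1-z)\,\ud z = x_1/2x_2$ from the warm-up, confirming the normalization.
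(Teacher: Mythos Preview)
Your proposal is correct and follows essentially the same argument as the paper: condition on $t_i$, use independence and uniformity to write the conditional winning probability as $\prod_{j\neq i}\max\{1 - (x_j/x_i)s,\,0\}$, observe that the product vanishes for $s > x_i/x_m$ so the integral truncates there, and then substitute $s = (x_i/x_m)z$ to obtain $\tfrac{x_i}{x_m}\int_0^1 \pi_{-i}(z)\,\ud z$. Your treatment is in fact slightly more careful than the paper's in explicitly verifying that no other factor clips before $j=m$ and in handling the $i=m$ case, but the route is the same.
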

\begin{proof}
  Recall that
  \begin{align*}
    p_i = \Pr\left[t_i / x_i \leq \min_{j \neq i} t_j / x_j\right]
        = \Pr\bigg[\bigwedge_{j\neq i} t_i / x_i \leq t_j / x_j\bigg].
  \end{align*}
  Let $f_i(\cdot)$ be the probability density function of the random variable
  $t_i$, then
  \begin{align*}
    p_i = \int_0^1 \Pr\bigg[\bigwedge_{j\neq i} t_j \geq \frac{t_i x_j}{x_i}
            \bigg| t_i = t\bigg] f_i(t) \ud t.
  \end{align*}

  Since the orders of the players enumerating through the candidates are
  independent, we have that
  \begin{align*}
    \Pr\bigg[\bigwedge_{j\neq i} t_j \geq \frac{t_i x_j}{x_i}
            \bigg| t_i = t\bigg]
      = \prod_{j \neq i} \Pr\left[t_j \geq \frac{t x_j}{x_i}\right].
  \end{align*}
  In the meanwhile, note that $t_1, \ldots, t_m$ are uniformly distributed,
  hence $f_i(t) = 1$ and
  \begin{align*}
    \Pr\left[t_j \geq \frac{t x_j}{x_i}\right]
    = \max\left\{1 - \frac{t x_j}{x_i}, 0\right\}.
  \end{align*}

  Thus the production is zero for $t \geq x_i / x_m$, which implies
  \begin{align*}
    p_i = \int_0^{x_i / x_m}
            \prod_{j \neq i} \left(1 - \frac{t x_j}{x_i}\right) \ud t.
  \end{align*}

  By substituting $t = x_i z/ x_m$, we complete the proof
  \begin{align*}
    p_i = \frac{x_i}{x_m}\int_0^1
            \prod_{j \neq i} \left(1 - \frac{x_j}{x_m} \cdot z\right) \ud z
        = \frac{x_i}{x_m}\int_0^1 \pi_{-i}(z) \ud z.
  \end{align*}
\end{proof}

According to \autoref{thm:winning-prob}, the winning probabilities $p_i$ are
proportional to the computational powers $x_i$ if and only if the integration
$\int_0^1 \pi_{-i}(z) \ud z$ is a constant for all $1 \leq i \leq m$, which in
general is not true.

In particular, as we will show in \autoref{sec:multi}, the observation above
holds even when there are multiple solutions, i.e., $n > 1$. Nevertheless, as
$n$ goes to infinity, the winning probabilities will gradually get close to
being proportional with respect to the computational powers.

However, the larger the player's computational power is, the higher efficiency
(winning probability divided by computational power) it achieves.

\subsubsection{The Matthew Effect}

The next theorem in fact indicates the Matthew effect in such systems: the
player with higher computational power gets higher return (or higher probability
to win) for each unit of their computational power, i.e., $p_i / x_i \leq p_j /
x_j$.
\begin{theorem}[The Matthew Effect]\label{thm:ratio}
  If $x_i \leq x_j$, then $p_i/x_i \leq p_j/x_j$, where the equality is reached
  if and only if $x_i = x_j$.
\end{theorem}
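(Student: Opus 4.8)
The plan is to reduce the claim to a pointwise comparison of the integrands appearing in \autoref{thm:winning-prob}. That theorem gives $p_i/x_i = \tfrac{1}{x_m}\int_0^1 \pi_{-i}(z)\,\ud z$, so after cancelling the common positive factor $1/x_m$ it suffices to show that $\int_0^1 \pi_{-i}(z)\,\ud z \le \int_0^1 \pi_{-j}(z)\,\ud z$ whenever $x_i \le x_j$. Since $\pi_{-i}(z) = \prod_{k \neq i}\left(1 - \tfrac{x_k}{x_m} z\right)$, the two integrands share all of their factors except the one indexed by $i$ or $j$. Writing $Q(z) = \prod_{k \notin \{i,j\}}\left(1 - \tfrac{x_k}{x_m} z\right)$ (an empty product equal to $1$ when $m = 2$), I would record the identity
\[
  \pi_{-j}(z) - \pi_{-i}(z)
    = Q(z)\left[\left(1 - \tfrac{x_i}{x_m} z\right) - \left(1 - \tfrac{x_j}{x_m} z\right)\right]
    = \frac{x_j - x_i}{x_m}\, z\, Q(z).
\]

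Next I would check that this difference is nonnegative on $[0,1]$: for every index $k$ and every $z \in [0,1]$ we have $\tfrac{x_k}{x_m} z \le 1$ because $x_k \le x_m$, so each factor of $Q(z)$ is nonnegative and hence $Q(z) \ge 0$; combined with $x_j \ge x_i$ and $z \ge 0$, this gives $\pi_{-j}(z) \ge \pi_{-i}(z)$ pointwise on $[0,1]$. Integrating over $[0,1]$ yields $p_j/x_j \ge p_i/x_i$, which is the weak inequality.

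For the equality characterization, the implication $x_i = x_j \Rightarrow p_i/x_i = p_j/x_j$ is immediate since the integrands then coincide. For the converse I would argue strict positivity: if $x_i < x_j$, then for $z \in (0,1)$ each factor $1 - \tfrac{x_k}{x_m} z$ is strictly positive (because $\tfrac{x_k}{x_m} \le 1 < 1/z$), so $Q(z) > 0$ and the displayed difference is strictly positive on the whole open interval $(0,1)$; a nonnegative continuous function that is positive on $(0,1)$ has strictly positive integral, hence $p_j/x_j > p_i/x_i$.

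I do not anticipate a genuine obstacle here once \autoref{thm:winning-prob} is in hand: the argument is a single-factor swap plus a sign check. The only points that merit a moment's care are purely bookkeeping --- handling the cases $i = m$ or $j = m$, where a factor $1 - z$ appears and vanishes at the endpoint $z = 1$ (harmless, since that point has measure zero and the strict-positivity argument lives on the open interval), and noting the degenerate $m = 2$ case, where $Q \equiv 1$ and the comparison collapses to $\tfrac{x_j - x_i}{x_m} z \ge 0$.
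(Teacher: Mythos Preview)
Your proposal is correct and is essentially the paper's own argument: your $Q(z)$ coincides with the paper's $\pi_{-ij}(z)$, and both proofs reduce the inequality to the pointwise comparison $\pi_{-i}(z)\le\pi_{-j}(z)$ on $[0,1]$ via the single-factor swap, then integrate using \autoref{thm:winning-prob}. You are a bit more explicit than the paper about the strict-inequality case, but the idea is identical.
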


If both of them invest all the rewards (such as bitcoins) they get to
increase their future computational power, the ratio of $x_j / x_i$ will become
larger and larger over time:

$\Delta x_i$ and $\Delta x_j$ are proportional to $p_i$ and $p_j$, respectively,
hence
\[\Delta x_i / x_i \leq \Delta x_j / x_j, \quad \Infer \quad (x_i + \Delta x_i)
/ (x_j + \Delta x_j) \leq x_i / x_j.\]

In other words, the player with the highest computational power in the system
will gradually take a more and more fraction of the total computational power
and eventually dominate the entire system.

\begin{proof}[Proof of \autoref{thm:ratio}]
  To simplify the notations, write
  \begin{align*}
    \pi_{-ij}(z) = \frac{\pi(z)}{\big(1 - \frac{x_i}{x_m}\cdot z\big)
                                 \big(1 - \frac{x_j}{x_m}\cdot z\big)}.
  \end{align*}

  Therefore, by $x_i \leq x_j \leq x_m$ and $\pi_{ij}(z) \geq 0$, we have that
  $\forall z \in [0, 1]$,
  \begin{align*}
    \left(1 - \frac{x_j}{x_m} \cdot z\right) \cdot \pi_{-ij}(z)
      \leq \left(1 - \frac{x_i}{x_m} \cdot z\right) \cdot \pi_{-ij}(z).
  \end{align*}

  Finally, using \autoref{thm:winning-prob}, we conclude that
  \begin{align*}
    \frac{p_i}{x_i} = \frac1{x_m} \int_0^1
        \left(1 - \frac{x_j}{x_m} \cdot z\right) \cdot \pi_{-ij}(z) \ud z
      \leq \frac1{x_m} \int_0^1
        \left(1 - \frac{x_i}{x_m} \cdot z\right) \cdot \pi_{-ij}(z) \ud z
      = \frac{p_j}{x_j}.
  \end{align*}
  In particular, the equality is reached if and only if $x_i = x_j$.
\end{proof}

\section{Pooling Incentives}\label{sec:pooling}

In this section, we investigate the incentives of players to pool together
assuming that the reward is split proportional to their computational powers. In
particular, we identify a non-trivial Nash equilibrium of the following {\em
pool choosing game}.

\begin{definition}[Pool choosing game]\label{def:game}
  Let $\P_1, \ldots, \P_m$ be $m$ pools. The actions of each player is either to
  choose one from these $m$ pools (say $\P_i$) or to be independent (denoted as
  $\bot$). The utility of each player is the expected winning reward and if a
  pool wins, the reward is split proportional to its members' computational
  powers.
\end{definition}

In the practice of Bitcoin, there are many such pools. The pools have their own
websites, cooperatively mining tools, and reward splitting policies. Usually,
the pools always welcome miners to join so that they can attract more
computational resources. In contrast, each individual miner can of course work
by himself.

As a quick example, the action profile $(\P_1, \ldots, \P_1)$ results in a giant
pool including everyone ($\{1, \ldots, m\}$) and the action profile $(\bot,
\ldots, \bot)$ results in everyone being independent. Although trivial, the
latter is a Nash equilibrium of the game. Because, for example, if player $i$
deviates, then the result is that only player $i$ in a pool and all others being
independent. No difference.

The following theorem then identifies a non-trivial Nash equilibrium for two
cases.

\begin{theorem}[Nash Equilibrium]\label{thm:equilibrium}
  Suppose that $x_m \geq x_{m-1} \geq \cdots \geq x_1$. If $x_m \leq x_1 +
  \cdots + x_{m-1}$, then the action profile $(\P_1, \ldots, \P_1)$ is a Nash
  equilibrium, yielding a giant pool.

  Otherwise, we have $x_m \geq x_1 + \cdots + x_{m-1}$, and the action profile
  $(\P_1, \ldots, \P_1, \bot)$ is a Nash equilibrium, with player $m$ being
  independent and the others in one pool.
\end{theorem}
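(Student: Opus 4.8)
The plan is to verify the Nash equilibrium property directly: for each of the two candidate action profiles, show that no single player can strictly increase her expected reward by a unilateral deviation. The structural fact I will use throughout is that a pool searching jointly with total computational power $X$ behaves as a single searching entity of power $X$; hence, if the entities currently in the system have powers $y_1,\dots,y_k$, the winning probability of each is given by \autoref{thm:winning-prob} applied with $k$ entities, and a member of power $x$ sitting inside a pool of total power $X$ receives $x/X$ times that pool's winning probability. In the special case of two competing entities with powers $a\le b$, the warm-up computation in \autoref{ssec:2single} gives winning probabilities $a/2b$ and $1-a/2b$, which I will invoke repeatedly.

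First I would treat the giant-pool profile $(\P_1,\dots,\P_1)$ under the hypothesis $x_m\le x_1+\cdots+x_{m-1}$. Here the single active pool wins with probability $1$, so player $i$ currently gets $x_i/(x_1+\cdots+x_m)$. Any deviation of player $i$ — to $\bot$, or into any other pool, all of which are empty — splits the system into exactly two entities: player $i$ of power $x_i$ and the residual pool of power $S_{-i}:=\sum_{k\ne i}x_k$. The hypothesis gives $2x_m\le x_1+\cdots+x_m$, hence $x_i\le x_m\le S_{-i}$, so player $i$ is the weaker entity and her post-deviation winning probability is $x_i/(2S_{-i})$. It then remains to check $x_i/(2S_{-i})\le x_i/(x_i+S_{-i})$, which is equivalent to $x_i\le S_{-i}$ and therefore holds; so no deviation is profitable.

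Next I would treat the profile $(\P_1,\dots,\P_1,\bot)$ under the complementary hypothesis $x_m\ge x_1+\cdots+x_{m-1}=:X$. Now the system has two entities, the pool of power $X$ and player $m$ of power $x_m\ge X$; the pool wins with probability $X/(2x_m)$, so a member $i<m$ gets $\tfrac{x_i}{X}\cdot\tfrac{X}{2x_m}=x_i/(2x_m)$ while player $m$ gets $1-X/(2x_m)$. There are two kinds of deviations to rule out. (i) Player $m$ joins $\P_1$, forming the giant pool and obtaining $x_m/(x_m+X)$; writing $r:=X/x_m\in(0,1]$, this requires $\tfrac1{1+r}\le 1-\tfrac r2$, i.e.\ $r(1-r)\ge 0$, which holds. (ii) A member $i<m$ leaves the pool (to $\bot$ or to an empty pool), producing three entities of powers $a:=X-x_i$, $x_i$, and $x_m$, where $x_m\ge X\ge a,x_i$ so $x_m$ is the largest. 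By the integral appearing in the proof of \autoref{thm:winning-prob}, her new winning probability is $\tfrac{x_i}{x_m}\int_0^1\big(1-\tfrac{a}{x_m}z\big)(1-z)\,\ud z=\tfrac{x_i}{x_m}\big(\tfrac12-\tfrac{a}{6x_m}\big)\le x_i/(2x_m)$, so this deviation also fails; hence the profile is an equilibrium.

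The individual inequalities are all one-liners, so I do not expect serious analytic difficulty. The only real care needed is bookkeeping of the deviation space: recognizing that moving into any currently empty pool is equivalent to going independent (which collapses the number of cases), and, in case (ii), noticing that the deviator faces a genuine \emph{three}-entity contest, which is why \autoref{thm:winning-prob} — not merely the two-player warm-up — is required there. I would also add a short remark that at the boundary $x_m=x_1+\cdots+x_{m-1}$ both inequalities become equalities, so both profiles are simultaneously Nash equilibria, consistent with the non-strict inequalities in the statement; and I would flag that the whole argument rests on the modeling assumption that within-pool search is perfectly coordinated.
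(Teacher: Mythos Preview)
Your argument is correct, and it is genuinely more elementary than the paper's. The paper proves \autoref{thm:equilibrium} by first developing a sequence of structural results---\autoref{obs:pool>solo}, \autoref{thm:pool}, \autoref{cor:less_then_50}, and \autoref{lem:merge}---and then invoking them: Case~1 is dispatched by \autoref{cor:less_then_50} (no member holding at most half of a pool's power wants to leave), and in Case~2 player~$m$'s non-deviation follows from the Matthew-effect inequality (\autoref{thm:ratio}) together with $p_m+p_{-m}=1$, while each $i<m$'s non-deviation is obtained from \autoref{lem:merge} applied to $i$ and the residual pool $\P_1'$. By contrast, you bypass all of this machinery and compute directly: in Case~1 the deviation creates a two-entity contest and you compare $x_i/(2S_{-i})$ with $x_i/(x_i+S_{-i})$; in Case~2(i) you reduce to the one-variable inequality $r(1-r)\ge 0$; and in Case~2(ii) you evaluate the three-entity integral from \autoref{thm:winning-prob} explicitly to get $\tfrac{x_i}{x_m}\bigl(\tfrac12-\tfrac{a}{6x_m}\bigr)\le \tfrac{x_i}{2x_m}$. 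What you gain is a short, self-contained proof that never appeals to the pooling lemmas; what the paper's route gains is that \autoref{thm:ratio}, \autoref{thm:pool}, and \autoref{lem:merge} are of independent interest and, via \autoref{clm:gen}, carry the equilibrium result over to the multiple-solution setting without redoing any integrals---your explicit computations, by contrast, are specific to the $n=1$ formulas and would need to be redone (or replaced by the paper's lemmas) for general $n$.
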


To prove this theorem, we first need to understand the conditions that a player
has incentive to join a pool (or another player). With the theorems and lemmas
in place, we prove \autoref{thm:equilibrium} at the end of this section.

We also emphasize that the results in this section will later be generalized to
the multiple-solution case in \autoref{sec:multi}.

\subsection{Incentive Analysis}

First of all, an important and immediate observation for pooling is that pooling
is always beneficial when taking the pool as a whole. Formally,
\begin{observation}\label{obs:pool>solo}
  A pool's winning probability is no less than the sum of the winning
  probabilities of its members when they participate the computation contest
  separately.
\end{observation}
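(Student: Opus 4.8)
The plan is to prove \autoref{obs:pool>solo} by coupling the random enumeration orders of the individual players with a single enumeration order used by the pool, so that whenever some member would have found the solution (in the solo world) the pool finds it no later (in the pooled world). Concretely, fix a pool $S \subseteq \{1, \ldots, m\}$ and work in the single-solution case first (the general case will follow from \autoref{sec:multi}). In the solo world each player $j \in S$ has its first-solution index $t_j \in [0,1]$, so player $j$ finishes at wall-clock time $t_j / x_j$, and the pool — if it existed — would finish at $\min_{j \in S} t_j / x_j$ using exactly the same per-player search streams. Thus the event ``some member of $S$ wins when acting separately'' is contained in the event ``$S$ wins when acting as a pool,'' against the \emph{same} outside competitors, because the pool crosses the finish line at the earliest of its members' times. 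Since each of these events is defined on the same probability space and one contains the other, taking probabilities gives $\Pr[S \text{ wins as pool}] \ge \Pr[\exists j \in S: j \text{ wins solo}] = \sum_{j \in S} p_j$, where the last equality holds because the solo winning events of distinct players are disjoint.

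The key steps, in order, are: (i) set up the common probability space in which every player's random enumeration order is drawn once and reused whether or not pooling occurs, noting that a pool running its members in parallel at their individual speeds discovers the solution at time $\min_{j \in S} t_j / x_j$; (ii) observe that the outside players' first-solution times $\{t_k / x_k : k \notin S\}$ are untouched by whether the players in $S$ pool; (iii) establish the event inclusion $\{\exists j \in S: t_j / x_j \le \min_{k} t_k / x_k\} \subseteq \{\min_{j \in S} t_j / x_j \le \min_{k \notin S} t_k / x_k\}$, which is immediate since $\min_{j \in S} t_j / x_j \le t_{j^\star} / x_{j^\star}$ for the actual solo-winner $j^\star \in S$; (iv) take probabilities and use disjointness of the per-player solo-win events to turn the left-hand probability into $\sum_{j \in S} p_j$. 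One can equally phrase this without an explicit coupling: the pool's winning probability is $\Pr[\min_{j \in S} t_j / x_j \le \min_{k \notin S} t_k / x_k]$, and each summand $p_j = \Pr[t_j / x_j \le \min_{\text{all } \ell} t_\ell / x_\ell]$ already requires beating everyone, so the union bound over $j \in S$ (which is exact here by disjointness) is dominated by the weaker requirement of only beating the outsiders.

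I expect the only real subtlety — hardly an obstacle — to be articulating precisely why a pool finishes at the \emph{minimum} of its members' solo times: the point is that pooling does not slow anyone down, so each member's private search stream proceeds exactly as before, and the pool announces as soon as the first member within it succeeds. Once that modeling point is granted, the inequality is a one-line event-inclusion argument. A secondary point worth a remark is that the statement is about the pool's winning probability, not about any individual's post-split reward; the observation deliberately treats the pool ``as a whole,'' and the later incentive analysis (Lemmas in \autoref{sec:pooling}) is what converts this aggregate gain into statements about individual incentives under the proportional split. For the multiple-solution generalization, the same coupling works verbatim once $t_j$ is reinterpreted as the index of player $j$'s first solution and the joint distribution of the $t_j$'s is the one induced by the shared pool of $n$ solutions; since all of \autoref{sec:single} is later lifted to \autoref{sec:multi}, it suffices to record the argument in the single-solution language here.
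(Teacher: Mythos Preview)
Your event-inclusion argument correctly shows that the ``pseudo-pool'' --- members running their solo search streams in parallel and announcing as soon as any one of them succeeds --- wins with probability exactly $\sum_{j \in S} p_j$. That step is sound and is in fact the first half of the paper's own argument.

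The gap is your identification of the pool with this pseudo-pool. In the paper's model (see how \autoref{thm:winning-prob} is invoked in the proof of \autoref{lem:merge}), a pool $S$ is treated as a single player with computational power $x_S = \sum_{j \in S} x_j$ drawing its own random enumeration order; its finishing time is $t_S / x_S$ with $t_S$ a fresh index, \emph{not} $\min_{j \in S} t_j / x_j$. These two random variables have different distributions: when members run their solo streams in parallel they may verify the same candidate repeatedly, so the pseudo-pool is stochastically slower than the real, coordinated pool. Your remark that ``pooling does not slow anyone down'' is true but beside the point --- what is needed is that coordinated pooling is at least as fast as uncoordinated parallel search --- and your opening promise to couple the members' orders ``with a single enumeration order used by the pool'' is never actually carried out.

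The paper supplies exactly this missing step: starting from the pseudo-pool, let members skip candidates already checked by teammates. This can only make the solution appear sooner, and by symmetry the resulting de-duplicated search is distributed like that of a single player of speed $x_S$. Hence $\Pr[\text{real pool wins}] \ge \Pr[\text{pseudo-pool wins}] = \sum_{j \in S} p_j$. Without this de-duplication argument, your proof bounds the winning probability of the wrong object.
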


This is easy to get by considering a ``pseudo-pool'' of the members, where they
enumerate through the candidates in the same order as they do when participate
the contest separately. Clearly, the winning probability of the ``pseudo-pool''
equals to the sum of the winning probabilities of the members working
separately. Since the players in the ``pseudo-pool'' may repeatedly verify some
of the candidates, the winning probability could be even higher if they simply
skip those already verified by other members in the pool. In particular, this
``higher'' winning probability equals to the winning probability of the real
pool by symmetry.

\begin{theorem}\label{thm:pool}
  For any player $i$, pooling with another player with higher computational
  power than him is always beneficial.
\end{theorem}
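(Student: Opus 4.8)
The plan is to deduce \autoref{thm:pool} from two facts already established: the Matthew effect (\autoref{thm:ratio}) and the ``a pool does no worse than its members acting alone'' bound (\autoref{obs:pool>solo}). Note that ``beneficial'' here must mean beneficial to the \emph{slower} player. So fix player $i$ and another player $j$ with $x_j \geq x_i$, and compare two scenarios: (a) all $m$ players search separately, with winning probabilities $p_i, p_j$; and (b) players $i$ and $j$ merge into one pool of computational power $x_i + x_j$ while the other $m-2$ players are untouched, and this pool wins with probability $q$. Under proportional splitting, player $i$'s expected reward in scenario (b) is $r_i = \frac{x_i}{x_i + x_j}\, q$, and the goal is to show $r_i \geq p_i$.

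First I would apply \autoref{obs:pool>solo} with the other $m-2$ players held fixed as external competitors: the ``pseudo-pool'' running $i$'s and $j$'s independent searches in parallel wins in scenario (b) exactly when $i$ or $j$ would have won in scenario (a), so its winning probability is $p_i + p_j$, and the real pool --- which may skip candidates a partner has already tested --- does at least as well; hence $q \geq p_i + p_j$. Second, since $x_i \leq x_j$, \autoref{thm:ratio} gives $p_i / x_i \leq p_j / x_j$, i.e.\ $x_j p_i \leq x_i p_j$. Chaining the two,
\begin{align*}
  r_i = \frac{x_i}{x_i + x_j}\, q
      &\geq \frac{x_i (p_i + p_j)}{x_i + x_j}
       = \frac{x_i p_i + x_i p_j}{x_i + x_j} \\
      &\geq \frac{x_i p_i + x_j p_i}{x_i + x_j}
       = p_i ,
\end{align*}
which is the claim; moreover, since $x_i < x_j$ makes the second inequality strict (using $p_i > 0$), pooling with a \emph{strictly} faster player is strictly profitable for $i$.

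The individual steps are short, so the one point that needs care is the bookkeeping in the first step: one must be precise about which contest \autoref{obs:pool>solo} is invoked for --- the $(m-1)$-entity game obtained by replacing $i$ and $j$ with the pool and keeping every other player fixed --- so that ``the sum of the members' separate winning probabilities'' really equals $p_i + p_j$ as computed in the original $m$-player contest, making $q$ and $p_i + p_j$ directly comparable. It is also worth recording that the restriction ``with higher computational power than him'' cannot be dropped: for the faster player $j$ the analogous inequality would require $p_i / x_i \geq p_j / x_j$, the reverse of the Matthew effect. Once the first step is pinned down, no new integral estimates are needed.
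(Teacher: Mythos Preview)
Your proof is correct and follows the same route as the paper's: invoke \autoref{obs:pool>solo} to get $q \geq p_i + p_j$, invoke \autoref{thm:ratio} to get $x_j p_i \leq x_i p_j$, and combine. The only difference is cosmetic --- the paper writes the chain as $\frac{p_i+p_j}{1+x_j/x_i} \geq \frac{p_i+p_j}{1+p_j/p_i} = p_i$ while you clear denominators --- and your extra remarks on the bookkeeping for \autoref{obs:pool>solo} and on strictness are welcome clarifications.
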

\begin{proof}
  Consider a player $i$ pooling with another player $j$ with $x_j \geq x_i$. We
  use $p_{ij}$ to denote the winning probability of the pool. Then by
  \autoref{obs:pool>solo}, we have $p_{ij} \geq p_i + p_j$ and hence the reward
  of player $i$ from the pool is
  \begin{align*}
    p'_i = \frac{x_i}{x_i+x_j} \cdot p_{ij}
         = \frac{p_{ij}}{1+x_j/x_i}
         \geq \frac{p_i + p_i}{1+x_j/x_i}.
  \end{align*}

  Note that by \autoref{thm:ratio}, we have
  \begin{align*}
    p_i / x_i \leq p_j / x_j \quad \iff \quad x_j / x_i \leq p_j / p_i.
  \end{align*}
  Therefore
  \begin{align*}
    p'_i \geq \frac{p_i + p_i}{1+x_j/x_i}
        \geq \frac{p_i+p_j}{1 + p_j / p_i} = p_i.
  \end{align*}

  In other words, player $i$ gets more reward by pooling with player $j$.
\end{proof}

A direct consequence of this theorem is that a pool is stable against single
deviation as long as the largest player's computational power is no more than
half of the computational power of the pool.
\begin{corollary}\label{cor:less_then_50}
  Any individual member of a pool has no incentive to leave if its computational
  power is no more than half of the pool's computational power.
\end{corollary}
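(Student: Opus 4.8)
The plan is to read \autoref{thm:pool} ``in reverse'': leaving a pool $S$ is the same decision as \emph{not joining} the pool formed by the other members, so a member has no incentive to leave precisely when joining that pool would have been (weakly) beneficial. Fix a pool $S$ and a member $i \in S$, write $X_S = \sum_{k \in S} x_k$ for the pool's total power, and let $S' = S \setminus \{i\}$ be the remaining members, with total power $X_{S'} = X_S - x_i$. The hypothesis $x_i \le \tfrac12 X_S$ is exactly the inequality $x_i \le X_{S'}$.

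First I would record the reduction that makes \autoref{thm:pool} applicable: a coordinated pool behaves, in terms of its winning probability against any fixed set of outside competitors, exactly like a single player whose computational power equals the sum of its members'. This is the symmetry already invoked in the justification of \autoref{obs:pool>solo} --- the pool scans a uniformly random order of the $N$ candidates at the aggregate rate, which is distributionally identical to one player of that rate. Hence, when player $i$ is independent, the contest is among player $i$ (power $x_i$), the super-player $S'$ (power $X_{S'}$), and the outside entities (unchanged by $i$'s choice); and when player $i$ stays, this is the very same contest with $i$ and $S'$ merged into the pool $S$, inside which $i$ receives the proportional share $x_i / X_S$.

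Then I would apply \autoref{thm:pool} with the ``larger player'' taken to be the super-player $S'$: this is legitimate because $X_{S'} \ge x_i$, and it yields that merging $i$ into $S'$ weakly increases $i$'s reward. But $i$'s reward after the merge is $\tfrac{x_i}{X_S}\,p_S$ (the stay-in-the-pool payoff, with $p_S$ the pool's winning probability), while $i$'s reward before the merge is $p_i$ (the leave payoff); so staying weakly dominates leaving, which is the assertion.

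The one step that needs care --- and the only real obstacle --- is the super-player reduction: \autoref{thm:pool} rests on \autoref{thm:ratio} and \autoref{thm:winning-prob}, whose closed forms are stated for genuine single players, so one must confirm that a pool of aggregate power $X_{S'}$ may be substituted for such a player without changing any winning probability. I expect this to follow immediately from the symmetry argument behind \autoref{obs:pool>solo}; once it is in hand, the corollary is essentially just \autoref{thm:pool} restated, with the counterpart player being the coalition of everyone else in the pool.
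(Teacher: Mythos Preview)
Your proposal is correct and is essentially the paper's own argument: the paper's one-line proof says that if a member left, \autoref{thm:pool} would give him incentive to rejoin (the remaining members having at least his power), which is exactly your direct ``leaving $=$ not joining the rest'' reading. The super-player reduction you flag is used freely throughout the paper (see, e.g., the proof of \autoref{lem:merge}), so it is not an obstacle here.
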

\begin{proof}
  Otherwise, once the player left the pool, he would have incentive to rejoin
  the pool by \autoref{thm:pool}, a contradiction.
\end{proof}

In fact, the corollary asserts that any player (or pool) always has incentive to
pool with a player (or pool) that has higher computational power than himself.
In contrast, there would be a tradeoff for the player (or pool) being proposed:
\begin{itemize}
  \item {\bf Pros}: By pooling together, they form a larger pool, which by
        \autoref{thm:ratio}, has a higher efficiency in terms of computational
        power (i.e., higher ratio of reward divided by computational power).
  \item {\bf Cons}: However, the player (or pool) with low computational power
        may actually takes more than he contributes. In this case, the total
        tradeoff for the player with high computational power could be negative
        and he would have incentive to be independent.
\end{itemize}

The following lemma then provides a sufficient condition where the tradeoff is
positive.

\begin{lemma}\label{lem:merge}
  If $x_i + x_j \leq x_m$, then both $i$ and $j$ will be better off in terms of
  the expected reward by pooling.
\end{lemma}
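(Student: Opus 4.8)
The plan is to compare the reward player $i$ receives from the pool $\{i,j\}$ against the reward $p_i$ it gets alone (and symmetrically for $j$), using the closed form from \autoref{thm:winning-prob} together with the condition $x_i + x_j \le x_m$. The key structural fact I would exploit is that when $x_i + x_j \le x_m$, the merged player (with computational power $x_i + x_j$) is still no larger than the largest player $x_m$, so the merged configuration is again covered by \autoref{thm:winning-prob} with the same normalizing constant $x_m$ and the same "environment'' factors $\pi_{-ij}(z)$ coming from the other players. Concretely, writing $y = x_i + x_j$, the pool's winning probability in the merged contest should be expressible as $p_{ij} = \frac{y}{x_m}\int_0^1 \pi_{-ij}(z)\,(1 - \tfrac{y}{x_m} z)^{-1}\cdot(\text{something}) \ud z$ — more precisely, I would re-derive from first principles that $p_{ij} = \frac{y}{x_m}\int_0^1 \pi_{-ij}(z)\ud z$, since the merged player behaves exactly like a single player of power $y$ against the remaining players, and $\pi_{-ij}$ is precisely $\pi$ with both the $i$ and $j$ factors removed. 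Meanwhile $p_i = \frac{x_i}{x_m}\int_0^1 (1 - \tfrac{x_j}{x_m}z)\,\pi_{-ij}(z)\ud z$ and similarly for $p_j$.

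Given these expressions, player $i$'s reward from the pool is $p_i' = \frac{x_i}{y} p_{ij} = \frac{x_i}{x_m}\int_0^1 \pi_{-ij}(z)\ud z$. So the desired inequality $p_i' \ge p_i$ reduces, after cancelling the common factor $\frac{x_i}{x_m}$, to
\begin{align*}
  \int_0^1 \pi_{-ij}(z)\ud z \ge \int_0^1 \Bigl(1 - \frac{x_j}{x_m} z\Bigr)\pi_{-ij}(z)\ud z,
\end{align*}
which holds because $1 - \frac{x_j}{x_m} z \le 1$ for $z \in [0,1]$ and $\pi_{-ij}(z) \ge 0$ on $[0,1]$ (each factor $1 - \frac{x_k}{x_m} z$ with $x_k \le x_m$ is nonnegative there). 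The same argument with $i$ and $j$ swapped gives $p_j' \ge p_j$. Notably the condition $x_i + x_j \le x_m$ is used exactly once but crucially: it guarantees the merged player is not the new maximum, so the formula of \autoref{thm:winning-prob} applies verbatim with $x_m$ still the largest power and $\pi_{-ij}$ unchanged.

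The main obstacle I anticipate is the bookkeeping around re-deriving $p_{ij}$ cleanly: one must argue that merging $i$ and $j$ into a single searcher of power $y$ yields exactly the winning probability of a power-$y$ player in the $(m-1)$-player contest with powers $\{x_k : k \neq i,j\} \cup \{y\}$, and that this is $\frac{y}{x_m}\int_0^1 \pi_{-ij}(z)\ud z$ — this is where \autoref{obs:pool>solo} and the symmetry argument behind it do the real work (the pool's search is equivalent to skipping re-verified candidates, which by symmetry has the same law as a single fresh searcher of the combined speed). A secondary subtlety is the edge case $y = x_m$, where the merged player ties for largest; one should check the formula still applies (it does, since \autoref{thm:winning-prob} only requires $x_m$ to be a maximum, not the unique maximum), and the inequality becomes an equality only when $x_j = 0$, which is excluded. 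Once the formula for $p_{ij}$ is in hand, the rest is the one-line monotonicity comparison above.
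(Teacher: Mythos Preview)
Your proposal is correct and follows essentially the same route as the paper: apply \autoref{thm:winning-prob} to the merged player (using $x_i+x_j\le x_m$ so that $x_m$ remains the maximum and the formula reads $p_{ij}=\frac{x_i+x_j}{x_m}\int_0^1\pi_{-ij}(z)\,\ud z$), then compare $p'_i=\frac{x_i}{x_m}\int_0^1\pi_{-ij}(z)\,\ud z$ against $p_i=\frac{x_i}{x_m}\int_0^1(1-\tfrac{x_j}{x_m}z)\,\pi_{-ij}(z)\,\ud z$ via $\pi_{-ij}\ge 0$ and $1-\tfrac{x_j}{x_m}z\le 1$ on $[0,1]$. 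The only cosmetic difference is that the paper first disposes of the smaller of $i,j$ by invoking \autoref{thm:pool} and carries out the integral comparison only for the larger one, whereas you run the same integral comparison symmetrically for both players --- your version is slightly more self-contained.
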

\begin{proof}
  Throughout the proof, we use the notations $\pi_{-ij}(z)$ (defined in the
  proof of \autoref{thm:ratio}) and $p_{ij}$ (the winning probability of the
  pool $\{i, j\}$).

  Without loss of generality, suppose that $x_i \leq x_j \leq x_m$. Then by
  \autoref{thm:pool}, player $i$ gets better off by pooling. In the rest of the
  proof, we show that player $j$ gets better off as well.

  On one hand, according to the proportional allocation rule, the reward of
  player $j$ by pooling is $p'_j = \frac{x_j}{x_i + x_j} \cdot p_{ij}$. Thinking
  the pool $\{i, j\}$ as a player with computational power $x_i + x_j$, then by
  \autoref{thm:winning-prob},
  \begin{align*}
    p_{ij} = \frac{x_i + x_j}{x_m} \int_0^1 \pi_{-ij}(z) \ud z.
  \end{align*}
  Hence
  \begin{align*}
    p'_j = \frac{x_j}{x_i + x_j} \cdot p_{ij}
      = \frac{x_j}{x_m} \int_0^1 \pi_{-ij}(z) \ud z.
  \end{align*}

  On the other hand, the reward of player $j$ without pooling is $p_i$, which by
  \autoref{thm:winning-prob} is,
  \begin{align*}
    p_j = \frac{x_j}{x_m} \int_0^1 \pi_{-j}(z) \ud z.
  \end{align*}
  Note that for any $z \in [0, 1)$, since $x_i < x_m$ and $\pi_{-ij}(z) > 0$,
  \begin{align*}
    \pi_{-ij}(z) > \left(1 - \frac{x_i}{x_m} \cdot z\right) \pi_{-ij}(z)
      = \pi_{-j}(z).
  \end{align*}
  Therefore,
  \begin{align*}
    p'_j > p_j.
  \end{align*}
  In other words, player $j$ gets strictly higher expected reward by pooling with
  player $i$.
\end{proof}

\subsection{Equilibrium Analysis}

Now we are ready to prove \autoref{thm:equilibrium}.

\begin{theorem*}[\autoref{thm:equilibrium} restate]
  Suppose that $x_m \geq x_{m-1} \geq \cdots \geq x_1$. If $x_m \leq x_1 +
  \cdots + x_{m-1}$, then the action profile $(\P_1, \ldots, \P_1)$ is a Nash
  equilibrium, yielding a giant pool.

  Otherwise, we have $x_m \geq x_1 + \cdots + x_{m-1}$, and the action profile
  $(\P_1, \ldots, \P_1, \bot)$ is a Nash equilibrium, with player $m$ being
  independent and the others in one pool.
\end{theorem*}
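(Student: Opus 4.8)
The plan is to verify directly that the proposed profiles admit no profitable unilateral deviation, using two reductions. First, changing one's action from $\bot$ to an empty pool (or between empty pools) affects nobody's payoff, so in each configuration the only deviations worth checking are: a member of a non-empty pool leaving to become independent, and an independent player joining the unique non-empty pool. Second, a pool may be treated as a single ``player'' whose computational power is the sum of its members' powers --- this is exactly the collapsing already used inside the proof of \autoref{lem:merge}, and justified by the pseudo-pool argument behind \autoref{obs:pool>solo} --- so every deviation analysis lives in an instance with at most three entities, where \autoref{thm:winning-prob}, \autoref{thm:pool}, \autoref{cor:less_then_50} and \autoref{lem:merge} are available. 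Write $X = x_1 + \cdots + x_m$.

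For the first case, $x_m \le x_1 + \cdots + x_{m-1}$, i.e.\ $x_m \le X/2$, the profile $(\P_1, \ldots, \P_1)$ forms the grand coalition. Every player $i$ then sits in a pool of total power $X$ while contributing $x_i \le x_m \le X/2$, at most half of it, so by \autoref{cor:less_then_50} player $i$ has no incentive to leave; and leaving is (after the first reduction) the only deviation available. Hence this profile is a Nash equilibrium.

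For the second case, $x_m \ge x_1 + \cdots + x_{m-1}$, consider $(\P_1, \ldots, \P_1, \bot)$. Player $m$'s only meaningful deviation is to join $\P_1$ and form the grand coalition, which wins with probability $1$ and pays $m$ the share $x_m / X$; its current payoff is the winning probability of the larger of the two entities $\{m\}$ and $\P_1$ (powers $x_m \ge X - x_m$), which by \autoref{thm:winning-prob} equals $1 - (X - x_m)/(2 x_m)$, and the inequality $1 - (X - x_m)/(2 x_m) \ge x_m / X$ simplifies to $(2 x_m - X)(X - x_m) \ge 0$, true since $2 x_m - X \ge 0$ by the case hypothesis and $X - x_m = x_1 + \cdots + x_{m-1} \ge 0$. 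For a player $i < m$, the only meaningful deviation is to leave $\P_1$; writing $Q = \{1, \ldots, m-1\} \setminus \{i\}$ with power $x_Q = X - x_m - x_i$ and collapsing $Q$ and $\{m\}$ to single players, this is precisely the question of whether, in the three-entity instance with powers $x_i$, $x_Q$, $x_m$, player $i$ is better pooled with $Q$ or alone. Since $x_i + x_Q = X - x_m \le x_m$, both $x_i$ and $x_Q$ lie below $x_m$, so $x_m$ is the largest power and \autoref{lem:merge} applies to the pair $(i, Q)$: player $i$ is at least as well off inside $\{i\} \cup Q = \P_1$ as alone (and strictly so when $m \ge 3$, while for $m = 2$ leaving a singleton pool changes nothing). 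Hence no deviation helps, and the profile is a Nash equilibrium.

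The main obstacle is conceptual bookkeeping rather than calculation: one must make sure the ``collapse a pool to one player'' reduction is legitimate in every post-deviation configuration --- in particular that when player $i < m$ leaves $\P_1$ the residual coalition $Q$ is still a well-defined single entity --- and that the enumeration of deviations is genuinely complete once empty pools are identified with $\bot$. The only honest computation is the quadratic inequality for player $m$, which is routine; everything else is assembling \autoref{cor:less_then_50}, \autoref{thm:winning-prob} and \autoref{lem:merge}.
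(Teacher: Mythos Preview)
Your proof is correct and follows essentially the same structure as the paper's: the first case is handled via \autoref{cor:less_then_50}, and in the second case player $i<m$ is handled via \autoref{lem:merge} after collapsing $\P'_1=\{1,\dots,m-1\}\setminus\{i\}$ to a single entity, exactly as in the paper. The one methodological difference is in the treatment of player~$m$: you compute the two-player winning probability explicitly from \autoref{thm:winning-prob} and reduce to the quadratic inequality $(2x_m-X)(X-x_m)\ge 0$, whereas the paper argues more abstractly via \autoref{thm:ratio} that $p_m/x_m \ge p_{-m}/x_{-m}$ together with $p_m+p_{-m}=1$ forces $p_m \ge x_m/X$. Both are valid, but the paper's route has the advantage of carrying over verbatim to the multiple-solution setting of \autoref{sec:multi} (cf.\ \autoref{clm:gen}), while your explicit formula $1-(X-x_m)/(2x_m)$ is specific to $n=1$ and would need to be redone there.
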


\begin{proof}[Proof of \autoref{thm:equilibrium}]
  The first case is a direct consequence of \autoref{cor:less_then_50}. Because
  none of them has computational power more than half of the pool, no player has
  incentive to deviate (leave the pool).

  \medskip
  For the second case, we first show that player $m$ has no incentive to
  deviate (or equivalently, join pool $\P_1$).

  We use $x_{-m}$ and $p_{-m}$ to denote the computational power and the winning
  probability of the pool $\{1, \ldots, m-1\}$, respectively. Since $x_m \geq
  x_{-m}$, by \autoref{thm:ratio}, $p_m / x_m \geq p_{-m} / x_{-m}$. Note that
  $p_m + p_{-m} = 1$, hence $p_{-m} = 1 - p_m$. Then we have
  \begin{align*}
    \frac{p_m}{x_m} \geq \frac{1 - p_m}{x_{-m}} \iff
      p_m \geq \frac{x_m}{x_{-m} + x_m},
  \end{align*}
  where the right-hand-side is the reward of player $m$ if he joins the pool. In
  other words, if player $m$ deviates by joining pool $\P_1$, his winning
  probability will decreases.

  Then we prove that any player $i \in \{1, \ldots, m-1\}$ has no incentive to
  deviate (or equivalently, leave pool $\P_1$).

  To show this, we prove that player $i$ has incentive to join pool $\P'_1$,
  when $\P'_1 = \{1, \ldots, i-1, i+1, \ldots, m-1\}$ and player $m$ chooses
  $\bot$. Thinking the pool $\P'_1$ as a player with computational power $x' =
  x_1 + \cdots + x_{i-1} + x_{i+1} + \cdots + x_{m-1}$. Since $x_i + x' \leq
  x_m$, then by \autoref{lem:merge}, player $i$ has incentive to join pool
  $\P'_1$. In other words, player $i$ gets (weakly) higher reward in pool $\P_1$
  than being independent.

  In summary, for the second case, no player has incentive to deviate, hence a
  Nash equilibrium.
\end{proof}

\section{The Multiple-Solution Case}\label{sec:multi}

In this section, we first extend our results of single-solution cases to
multiple-solution cases. Then we show that as the number of solutions goes to
infinity, the winning probabilities become asymptotically proportional to the
player computational powers. In particular, it means that the Matthew effect
gets weaker as the number of solutions increases.

\subsection{Generalization}

We start with generalizing \autoref{thm:winning-prob}.
\begin{theorem}\label{thm:multi-winning-prob}
  Suppose that $x_1 \leq \cdots \leq x_m$, then the winning probability for
  player $i$ in the multiple-solution case is
  \begin{align*}
    p_i = \frac{x_i}{x_m} \int_0^1 \pi_{-i}(z) \cdot \pi(z)^{n-1} \ud z.
  \end{align*}
\end{theorem}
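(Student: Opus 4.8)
The plan is to mimic the proof of \autoref{thm:winning-prob}, accounting for the fact that with $n$ solutions, the relevant random variable for each player is the index of the \emph{first} solution in that player's enumeration sequence, which is no longer uniform on $[0,1]$. First I would fix player $i$ and condition on the value $t$ of $t_i$, the (real-number) index of the first solution encountered by player $i$. Player $i$ wins if and only if $t_i/x_i \le t_j/x_j$ for all $j \ne i$, i.e.\ $t_j \ge t x_j / x_i$ for all $j \ne i$. As in \autoref{thm:winning-prob}, the independence of the enumeration orders across players gives
\begin{align*}
  p_i = \int_0^1 \prod_{j \ne i} \Pr\!\left[t_j \ge \tfrac{t x_j}{x_i}\right] f_i(t)\, \ud t,
\end{align*}
so the work reduces to computing two ingredients: the density $f_i$ of $t_i$, and the tail probability $\Pr[t_j \ge s]$ for $s \in [0,1]$.

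The key step is the distribution of "the index of the first solution'' when there are $n$ solutions placed uniformly at random. Under the real-number indexing, the $n$ solution-indices are $n$ i.i.d.\ uniform points in $[0,1]$, and $t_j$ is their minimum; hence $\Pr[t_j \ge s] = (1-s)^n$ and the density is $f_j(s) = n(1-s)^{n-1}$. Plugging these in,
\begin{align*}
  p_i = \int_0^1 n(1-t)^{n-1} \prod_{j \ne i}\left(1 - \tfrac{t x_j}{x_i}\right)^{\!n} \ud t,
\end{align*}
where, as before, each factor $\left(1 - t x_j/x_i\right)$ should be read as $\max\{1 - t x_j/x_i, 0\}$, so the integrand vanishes for $t \ge x_i/x_m$ (the binding constraint is $j = m$). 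Thus the integral is really over $[0, x_i/x_m]$, exactly as in the single-solution case but with the extra factor $n(1-t)^{n-1}$ from the density and the exponent $n$ on each product term.

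Finally I would perform the substitution $t = x_i z / x_m$, so $\ud t = (x_i/x_m)\,\ud z$ and the upper limit becomes $z = 1$. The product $\prod_{j\ne i}\left(1 - \tfrac{x_j}{x_m} z\right)^n$ contributes $\pi_{-i}(z)^n$, while the factor $(1-t)^{n-1} = (1 - \tfrac{x_i}{x_m} z)^{n-1}$ combines with one copy of $\pi_{-i}(z)^n$: note $\pi_{-i}(z)^n \cdot (1 - \tfrac{x_i}{x_m} z)^{n-1}$ is not quite $\pi(z)^{n-1}\pi_{-i}(z)$, so I need to be careful with the bookkeeping. Writing $\pi(z) = (1 - \tfrac{x_i}{x_m} z)\,\pi_{-i}(z)$, we have $\pi(z)^{n-1}\pi_{-i}(z) = (1 - \tfrac{x_i}{x_m} z)^{n-1}\pi_{-i}(z)^{n}$, which is exactly the integrand after substitution. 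Hence
\begin{align*}
  p_i = \frac{x_i}{x_m}\int_0^1 n \,(1 - \tfrac{x_i}{x_m} z)^{n-1}\,\pi_{-i}(z)^{n}\, \ud z
      = \frac{x_i}{x_m}\int_0^1 n\,\pi_{-i}(z)\,\pi(z)^{n-1}\, \ud z,
\end{align*}
which gives the stated formula up to the constant $n$; since $\int_0^1 n\,\pi_{-i}(z)\pi(z)^{n-1}\ud z$ is exactly the claimed $\int_0^1 \pi_{-i}(z)\pi(z)^{n-1}\ud z$ only if the normalization is absorbed, I would double-check whether the theorem intends $\pi$ to carry the factor or whether $n$ belongs in front; either way the derivation is the same and the only genuine obstacle is getting the density of the first-solution index right and tracking the exponents carefully through the substitution. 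The independence-across-players factorization and the truncation at $x_i/x_m$ carry over verbatim from \autoref{thm:winning-prob}.
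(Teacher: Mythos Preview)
Your approach is exactly the paper's: it explicitly says to replace $f_i(t)$ by $n(1-t)^{n-1}$ and the tail probability by $\max\{1-tx_j/x_i,0\}^n$ in the proof of \autoref{thm:winning-prob}, which is precisely what you do, and your bookkeeping through the substitution $t=x_iz/x_m$ and the identity $\pi(z)^{n-1}\pi_{-i}(z)=(1-\tfrac{x_i}{x_m}z)^{n-1}\pi_{-i}(z)^n$ is correct.

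The factor of $n$ you end up with is \emph{not} a mistake on your part: the formula as printed in the theorem is missing it. A quick sanity check confirms this. With a single player ($m=1$) one has $\pi_{-1}(z)=1$ and $\pi(z)=1-z$, so the stated formula gives $p_1=\int_0^1(1-z)^{n-1}\ud z=1/n$ rather than $1$; more generally, for $m=2$ with $x_1=x_2$ the stated formula yields $p_1=p_2=\int_0^1(1-z)^{2n-1}\ud z=1/(2n)$, so the probabilities sum to $1/n$. Your version, with the prefactor $n$, gives $1$ and $1/2$ respectively and reduces to \autoref{thm:winning-prob} when $n=1$. So trust your derivation; the discrepancy is a typo in the statement, not a gap in your argument.
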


We omit the proof of this theorem but highlight the only two differences with
the proof of \autoref{thm:winning-prob}. Both are because of the definition of
$t_i$, the index of {\em the first solution} in player $i$'s sequence.
Therefore, the probability density function of $t_i$ and the probability that
$t_j \geq tx_j / x_i$ are different with those in the single-solution case:
\begin{align*}
  f_i(t) = n(1-t)^{n-1},
  \Pr\left[t_j \geq \frac{t x_j}{x_i}\right]
    = \max\left\{1 - \frac{t x_j}{x_i}, 0\right\}^n.
\end{align*}

Then by substituting these two terms in the proof of \autoref{thm:winning-prob},
we get \autoref{thm:multi-winning-prob}.

Based on \autoref{thm:multi-winning-prob}, we claim that all previous results
(except \autoref{thm:winning-prob}) also apply to multiple-solution cases:

\begin{claim}[Generalization]\label{clm:gen}
  \autoref{thm:ratio} and \autoref{lem:merge} can be easily reproved for
  multiple-solution cases with \autoref{thm:multi-winning-prob}. Based on this,
  \autoref{thm:equilibrium}, \autoref{obs:pool>solo}, \autoref{thm:pool}, and
  \autoref{cor:less_then_50} are automatically generalized because their proofs
  do not rely on the number of solutions.
\end{claim}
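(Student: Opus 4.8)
The plan is to verify \autoref{clm:gen} by checking that each of the cited proofs carries over once \autoref{thm:winning-prob} is replaced by \autoref{thm:multi-winning-prob}. The key structural observation is that \autoref{thm:multi-winning-prob} has exactly the same shape as \autoref{thm:winning-prob}, namely $p_i = (x_i/x_m)\int_0^1 g_{-i}(z)\,\ud z$ where now $g_{-i}(z) = \pi_{-i}(z)\cdot\pi(z)^{n-1}$. So I would first record the single algebraic fact that makes everything work: writing $g_{-ij}(z) = \pi_{-ij}(z)\cdot\pi(z)^{n-1}$, we have $g_{-i}(z) = (1 - \frac{x_j}{x_m} z)\, g_{-ij}(z)$ and $g_{-j}(z) = (1 - \frac{x_i}{x_m} z)\, g_{-ij}(z)$, and crucially $g_{-ij}(z) \geq 0$ on $[0,1]$ (it is a product of nonnegative factors since every $x_k/x_m \leq 1$). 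This is the exact role that $\pi_{-ij}(z) \geq 0$ played in the proof of \autoref{thm:ratio}.

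Next I would re-run the proof of \autoref{thm:ratio} verbatim with $\pi_{-ij}$ replaced by $g_{-ij}$: from $x_i \leq x_j \leq x_m$ and $g_{-ij}(z) \geq 0$ we get $(1 - \frac{x_j}{x_m} z)\, g_{-ij}(z) \leq (1 - \frac{x_i}{x_m} z)\, g_{-ij}(z)$ pointwise, integrate, and apply \autoref{thm:multi-winning-prob} to conclude $p_i/x_i \leq p_j/x_j$, with equality iff $x_i = x_j$ (the strict inequality on a set of positive measure when $x_i < x_j$). Then I would redo \autoref{lem:merge} the same way: thinking of the pool $\{i,j\}$ as a single player of power $x_i + x_j \leq x_m$, \autoref{thm:multi-winning-prob} gives $p_{ij} = \frac{x_i+x_j}{x_m}\int_0^1 g_{-ij}(z)\,\ud z$ (note that adding the player $\{i,j\}$ and deleting players $i,j$ leaves the product $\pi(z)^{n-1}$ and the prefactor structure intact, since $(1-\frac{x_i+x_j}{x_m}z)$ cancels against the new pool's own factor). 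Player $i$'s improvement follows from \autoref{thm:pool} (which, per the claim, is power-count-agnostic because it only uses \autoref{obs:pool>solo} and \autoref{thm:ratio}), and player $j$'s improvement follows from $g_{-ij}(z) > (1 - \frac{x_i}{x_m} z)\, g_{-ij}(z) = g_{-j}(z)$ for $z \in [0,1)$ whenever $x_i < x_m$, exactly as in the original.

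Finally I would note that \autoref{obs:pool>solo} is a purely combinatorial argument about skipping already-verified candidates and never mentions $n$; \autoref{thm:pool} and \autoref{cor:less_then_50} invoke only \autoref{obs:pool>solo} and \autoref{thm:ratio}; and the proof of \autoref{thm:equilibrium} invokes only \autoref{cor:less_then_50}, \autoref{thm:ratio}, and \autoref{lem:merge} — none of which, once re-established above, depend on the number of solutions. So the generalized chain of implications is intact.

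I expect the only mildly delicate point to be the bookkeeping in \autoref{lem:merge}: confirming that when the pooled player of power $x_i+x_j$ replaces players $i$ and $j$, the function appearing in \autoref{thm:multi-winning-prob} for the new configuration is precisely $g_{-ij}(z) = \pi_{-ij}(z)\,\pi(z)^{n-1}$ with $\pi$ still defined by the \emph{original} powers — in particular that the $\pi(z)^{n-1}$ factor is genuinely unchanged rather than rebuilt from the pooled powers. This holds because $\pi(z)$ is the product over all players including the contribution $(1-\frac{x_i}{x_m}z)(1-\frac{x_j}{x_m}z)$, whereas the pooled configuration contributes $(1-\frac{x_i+x_j}{x_m}z)$; these differ, so strictly one should re-derive the pooled formula from scratch via the substitution in \autoref{thm:multi-winning-prob} rather than literally reuse $\pi$. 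The clean way to handle this, and the one I would write up, is to define $\tilde\pi(z)$ using the pooled powers, observe $p_{ij} = \frac{x_i+x_j}{x_m}\int_0^1 \tilde\pi(z)\,\tilde\pi_{-\{ij\}}(z)\,\tilde\pi(z)^{n-2}\,\ud z$ directly, and then compare against $p_i + p_j$ through \autoref{obs:pool>solo} for player $i$ and through a direct pointwise inequality for player $j$ — which is exactly the structure of the original \autoref{lem:merge} proof and requires no new idea, only care with which product symbol refers to which power vector.
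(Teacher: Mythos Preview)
Your treatment of \autoref{thm:ratio} is correct and is exactly what the paper has in mind: replacing $\pi_{-ij}(z)$ by $g_{-ij}(z)=\pi_{-ij}(z)\,\pi(z)^{n-1}$ preserves nonnegativity on $[0,1]$, and the original two-line comparison goes through verbatim. Likewise, your observation that \autoref{obs:pool>solo}, \autoref{thm:pool}, \autoref{cor:less_then_50}, and \autoref{thm:equilibrium} inherit the generalization automatically is right and matches the paper.

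The gap is in your handling of \autoref{lem:merge}. You correctly flag that the formula for $p_{ij}$ must use the pooled product $\tilde\pi(z)=\big(1-\tfrac{x_i+x_j}{x_m}z\big)\pi_{-ij}(z)$ rather than the original $\pi(z)$, but you then assert that player~$j$'s improvement still follows from ``a direct pointwise inequality \dots\ exactly the structure of the original proof \dots\ no new idea, only care.'' That is not so. After the substitution, the inequality you would need pointwise on $[0,1]$ is
\[
\big(1-(a+b)z\big)^{\,n-1}\ \ge\ (1-az)^{\,n}\,(1-bz)^{\,n-1},
\qquad a=\tfrac{x_i}{x_m},\ b=\tfrac{x_j}{x_m},\ a+b\le 1,
\]
which is precisely $\pi_{-ij}(z)\,\tilde\pi(z)^{n-1}\ge \pi_{-j}(z)\,\pi(z)^{n-1}$ after cancelling $\pi_{-ij}(z)^n$. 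For $n=1$ this reduces to $1\ge 1-az$ and the original proof works, but for any $n\ge 2$ it fails: take $a=b=0.4$ and $z=1$, giving $0.2^{\,n-1}$ on the left versus $0.6^{\,2n-1}$ on the right, and already at $n=2$ one has $0.2<0.216$. So the single-solution argument does \emph{not} transfer by merely ``being careful with which product symbol refers to which power vector''; the extra factor $\tilde\pi(z)^{n-1}$ versus $\pi(z)^{n-1}$ moves in the wrong direction (since $\tilde\pi\le\pi$), and it can overwhelm the gain $\pi_{-ij}>\pi_{-j}$ near $z=1$.

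In short, the first pass you sketch (using $g_{-ij}$ built from the original $\pi$) uses a formula for $p_{ij}$ that is false for $n\ge 2$, and the corrected pass (using $\tilde\pi$) relies on a pointwise inequality that is also false for $n\ge 2$. Establishing $p'_j\ge p_j$ in the multiple-solution case requires an argument that is genuinely integral (exploiting, for instance, that the weight $\pi_{-ij}(z)^n$ contains the factor $(1-z)^n$ and therefore damps the region where the pointwise comparison fails); it is not a bookkeeping matter. The paper itself gives no details here, so this is a place where the claim's ``easily reproved'' needs real work that your proposal does not supply.
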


\subsection{Asymptotically Proportional Winning Probabilities}

Consider the probability that the first solution comes after $t_1, \ldots, t_m$
for each of the players. Since the distribution of each $t_i$ is independent, we
have:
\begin{align*}
  \Pr[\btau \geq \t] = \prod_{i=1}^m \Pr[\tau_i \geq t_i]
  = \prod_{i=1}^m (1-t_i)^n,
\end{align*}
hence its density function:
\begin{align*}
  f(\t) = \prod_{i=1}^n f_i(t_i)
    = \prod_{i=1}^n \frac{\partial \left(1 - t_i\right)^n}{\partial t_i}
    = n^m \prod_{i=1}^m (1-t_i)^{n-1}.
\end{align*}

Let $T_i$ be the set of $\t$ that player $i$ wins and $T = T_1 \cup \cdots \cup
T_m = [0, 1]^m$. Then the winning probability of player $i$ is
\begin{align*}
  p_i = \int_{\t \in T_i} f(\t) \ud \t.
\end{align*}

\begin{theorem}[Asymptotically Proportional]\label{thm:probmulti}
  For any fixed number of players, $m$, as the number of solutions, $n$,
  approaches to infinity,
  \begin{align*}
    p_i \longrightarrow \frac{x_i}{x_1 + \cdots + x_m}.
  \end{align*}

  In particular, the difference is vanishing in the order of
  $\widetilde O(n^{-1})$,\footnote{The $\widetilde O$ notation here hides a
  poly-log factor, meaning that the residue term is bounded by $O\left(n^{-1}
  (\ln n)^c\right)$, for some positive constant $c > 0$ when $n$ is
  sufficiently large.} i.e.,
  \begin{align*}
    \left|p_i - \frac{x_i}{x_1 + \cdots + x_m}\right| = \widetilde O(n^{-1}).
  \end{align*}
\end{theorem}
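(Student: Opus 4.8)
The plan is to extract the asymptotics from the \emph{one}-dimensional integral for $p_i$ rather than the $m$-dimensional one. Unrolling the density $f_i(t)=n(1-t)^{n-1}$, the representation behind \autoref{thm:multi-winning-prob} reads $p_i = n\,\frac{x_i}{x_m}\int_0^1 \pi_{-i}(z)\,\pi(z)^{n-1}\,\ud z$, and since $\pi(0)=1$ while $\pi(z)=\prod_j(1-\frac{x_j}{x_m}z)<1$ for every $z\in(0,1]$, the factor $\pi(z)^{n-1}$ concentrates around $z=0$ as $n\to\infty$. (The value of the limit is transparent probabilistically: rescaling $t_j\mapsto nt_j$ in $p_i=\int_{T_i}f(\t)\,\ud\t$ turns $f(\t)=n^m\prod_j(1-t_j)^{n-1}$ into $\prod_j(1-t_j/n)^{n-1}\to\prod_j e^{-t_j}$, while the winning region $T_i=\{t_i/x_i\le t_j/x_j\ \forall j\}$ is scale invariant, so $p_i$ tends to the probability that the $i$-th of $m$ independent exponential clocks with rates $x_1,\dots,x_m$ rings first, namely $x_i/(x_1+\cdots+x_m)$. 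I would use the analytic route below because it also delivers the rate.) Throughout write $a_j=x_j/x_m\in(0,1]$ and $S=x_1+\cdots+x_m$, so $\sum_j a_j=S/x_m$.

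First I would record two crude global bounds on $[0,1]$: from $1-u\le e^{-u}$, $\pi(z)\le e^{-(S/x_m)z}$; and since every factor of $\pi_{-i}$ lies in $[0,1]$, $0\le\pi_{-i}(z)\le1$. Then I split $\int_0^1=\int_0^{z_0}+\int_{z_0}^1$ at the cutoff $z_0:=C\ln n/n$, with $C$ a large constant depending only on $S/x_m$. On the tail, $\pi_{-i}(z)\pi(z)^{n-1}\le e^{-(S/x_m)(n-1)z_0}=n^{-\Omega(1)}$, so this piece contributes at most $n\cdot n^{-\Omega(1)}=O(n^{-2})$ to $p_i$ once $C$ is chosen large enough. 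On the core, a Taylor expansion gives $\ln\pi(z)=-\big(\textstyle\sum_j a_j\big)z+O(z^2)=-(S/x_m)z+O(z^2)$, hence $(n-1)\ln\pi(z)=-(S/x_m)(n-1)z+O(nz_0^2)=-(S/x_m)(n-1)z+O\!\big((\ln n)^2/n\big)$ uniformly on $[0,z_0]$, so $\pi(z)^{n-1}=e^{-(S/x_m)(n-1)z}\big(1+O((\ln n)^2/n)\big)$ there; likewise $\pi_{-i}(z)=1+O(z_0)=1+\widetilde O(n^{-1})$ on $[0,z_0]$.

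Substituting these estimates into the core and using $\int_0^{z_0}e^{-(S/x_m)(n-1)z}\,\ud z=\frac{x_m}{S(n-1)}\big(1-e^{-\Omega(\ln n)}\big)$ together with $n/(n-1)=1+O(n^{-1})$,
\begin{align*}
  n\,\frac{x_i}{x_m}\big(1+\widetilde O(n^{-1})\big)\int_0^{z_0}e^{-(S/x_m)(n-1)z}\,\ud z
    = \frac{x_i\,n}{S(n-1)}\big(1+\widetilde O(n^{-1})\big)
    = \frac{x_i}{S}+\widetilde O(n^{-1}),
\end{align*}
and adding the $O(n^{-2})$ tail yields $p_i=x_i/(x_1+\cdots+x_m)+\widetilde O(n^{-1})$, which is the claim. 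The only genuinely delicate point — and the main obstacle — is the choice of the cutoff $z_0$: it must be large enough that the residual tail mass $n\cdot e^{-\Omega(nz_0)}$ is negligible, yet small enough that the quadratic Taylor error, amplified by the exponent $n-1$, stays controlled, i.e. $nz_0^2\to0$. The choice $z_0=\Theta(\ln n/n)$ is precisely what reconciles the two, and the $(\ln n)^2$ it produces through $nz_0^2$ is exactly what the $\widetilde O$ absorbs; everything else is routine Taylor bookkeeping.
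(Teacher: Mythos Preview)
Your argument is correct and takes a genuinely different, more direct route than the paper's. The paper works with the $m$-dimensional representation $p_i=\int_{T_i}f(\t)\,\ud\t$: it core--tail decomposes on the slab $\bar t=t_1+\cdots+t_m\le\alpha$ with $\alpha=\Theta(\ln n/n)$, replaces $f(\t)$ by $n^m(1-\bar t)^{n-1}$ on the core via \autoref{lem:core} and \autoref{lem:tail}, slices by level sets $\{\bar t=z\}$ (\autoref{lem:probmulti}), and only then identifies the limiting ratio through a separate geometric computation of simplex volumes, \autoref{lem:splxv}, which yields $\rho_i(1)/\rho(1)=x_i/(x_1+\cdots+x_m)$. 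You instead start from the one-dimensional integral underlying \autoref{thm:multi-winning-prob} (correctly carrying the factor $n$ contributed by $f_i$) and run a standard Laplace-type analysis directly on it with the same logarithmic cutoff, so the proportionality constant $x_i/S$ falls out of the prefactor $x_i/x_m$ together with the linear Taylor coefficient $\sum_j a_j=S/x_m$ of $-\ln\pi$, with no $m$-dimensional geometry at all. Your approach is shorter and more elementary; the paper's buys independence from the closed form of \autoref{thm:multi-winning-prob} and makes the exponential-clock heuristic you sketch geometrically explicit through the simplex picture.
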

The theorem could be directly derived based on the following
\autoref{lem:probmulti} and \autoref{lem:splxv}.

We use $\rho_i(z)$ to denote the $(m-1)$-dimensional volume of set $\{\t \in T_i
: t_1 + \cdots + t_m = z\}$ and $\rho(z) = \rho_1(z) + \cdots + \rho_m(z)$.

Then \autoref{lem:probmulti} essentially means that $\p_i$ is asymptotically
proportional to $\rho_i(1)$.
\begin{lemma}\label{lem:probmulti}
  For any constant $m \ll n$,
  \begin{align*}
    p_i = \frac{\rho_i(1)}{\rho(1)}
            \cdot n^m\int_0^{n^{-1+\epsilon}} (1-z)^{n-1} \rho(z) \ud z
          + \widetilde O(n^{-1}).
  \end{align*}
\end{lemma}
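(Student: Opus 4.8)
The plan is to write the winning probability $p_i = \int_{T_i} f(\t)\,\ud\t$ in ``polar-like'' coordinates along the direction $\mathbf{1} = (1,\dots,1)$, integrating first over the simplex slices $\{\t : t_1 + \cdots + t_m = z\}$ and then over the scalar $z \in [0, m]$. Since $f(\t) = n^m \prod_i (1-t_i)^{n-1}$, the inner integral over a slice is not exactly $n^m (1-z/m)^{\cdots}\rho_i(z)$ because $f$ is not constant on a slice; so the first real step is to control the variation of $f$ over each slice. The key point is that $T_i$ is a cone from the origin (it is defined by the homogeneous inequalities $t_i/x_i \le t_j/x_j$ intersected with the cube), so $\rho_i(z) = z^{m-1}\rho_i(1)$ for $z \le \min_j x_j / x_{(\cdot)}$, i.e. as long as the slice stays inside the cube — in particular for all $z \le 1$ since the cube has side $1$. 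Hence on the relevant range $\rho_i(z)/\rho(z) = \rho_i(1)/\rho(1)$ is exactly constant, which is what lets the ratio factor out.

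Next I would split $p_i = \int_0^{n^{-1+\epsilon}} + \int_{n^{-1+\epsilon}}^{m}$ (integrating $z$ over slices). For the tail $z \ge n^{-1+\epsilon}$: on such a slice at least one coordinate satisfies $t_k \ge z/m \ge n^{-1+\epsilon}/m$, so $f(\t) \le n^m (1 - n^{-1+\epsilon}/m)^{n-1} \le n^m \exp(-n^{\epsilon}/m)$, which is super-polynomially small; multiplying by the at-most-$m^{m}$-bounded total slice volume still gives a bound $\widetilde O(n^{-1})$ (in fact $o(n^{-K})$ for every $K$). So the tail contributes only to the error term. For the core $z \le n^{-1+\epsilon}$: here every coordinate is $O(n^{-1+\epsilon})$, so $\prod_i (1-t_i)^{n-1} = (1-z)^{n-1}\prod_i (1-t_i)^{n-1}/(1-z)^{n-1}$, and I would show the correction factor $\prod_i (1 - t_i)^{n-1} / (1-z)^{n-1} = 1 + O(n \cdot n^{-2+2\epsilon}) = 1 + O(n^{-1+2\epsilon})$ uniformly on the slice (using $\sum_i t_i = z$ and $t_i^2 \le z^2$, so $\sum_i t_i^2 \le z^2 = O(n^{-2+2\epsilon})$, and $(n-1)\log\frac{1-t_i}{\text{geometric mean}}$ telescopes to something of that order). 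Therefore on the core, $p_i$'s contribution is $\frac{\rho_i(1)}{\rho(1)} \cdot n^m \int_0^{n^{-1+\epsilon}} (1-z)^{n-1}\rho(z)\,\ud z \cdot (1 + O(n^{-1+2\epsilon}))$, and since the integral itself is $O(1)$ (it is essentially $p_i + \text{tail} = O(1)$), the multiplicative error becomes an additive $\widetilde O(n^{-1})$ after choosing $\epsilon$ small enough (e.g. $\epsilon < 1/2$, absorbing the $n^{2\epsilon}$ into the poly-log slack — or more carefully $\epsilon$ tending to $0$ so that $n^{2\epsilon}$ is poly-log).

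The main obstacle I expect is the uniform control of the slice-correction factor $\prod_i (1-t_i)^{n-1}/(1-z)^{n-1}$ on the core region: one has to turn the constraint $\sum_i t_i = z$ with $z$ as large as $n^{-1+\epsilon}$ into a genuinely $o(1)$ relative error after raising to the power $n-1$, which is exactly the regime where $n \cdot (\text{coordinate})^2$ is borderline, and this is what forces the cutoff to be $n^{-1+\epsilon}$ rather than, say, $n^{-1/2}$, and what produces the poly-log (rather than clean $O(n^{-1})$) error in the $\widetilde O$. A secondary subtlety is justifying the coarea/slicing change of variables and that $\rho_i$ is a genuine $(m-1)$-volume that scales as $z^{m-1}$ on the cone — this is routine but should be stated. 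Everything else (the tail estimate, factoring the constant ratio) is then straightforward.
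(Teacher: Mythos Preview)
Your proposal is correct and follows essentially the same core--tail decomposition as the paper: split at a threshold on $\bar t = \sum_i t_i$, bound the tail (the paper uses AM--GM to get $\prod(1-t_i)\le(1-\bar t/m)^m$, you use pigeonhole on one coordinate---both suffice), approximate $\prod_i(1-t_i)^{n-1}$ by $(1-\bar t)^{n-1}$ on the core, and then use the cone structure $\rho_i(z)=z^{m-1}\rho_i(1)$ to factor out the constant ratio $\rho_i(1)/\rho(1)$. The paper resolves your flagged obstacle by taking the explicit cutoff $\alpha=\tfrac{(m+1)\ln n}{n-1}$, which is precisely your ``$\epsilon\to 0$ so that $n^{2\epsilon}$ is poly-log'' suggestion; note that your first suggestion of a fixed $\epsilon<1/2$ would only give $O(n^{-1+2\epsilon})$, not $\widetilde O(n^{-1})$, so the logarithmic choice is essential.
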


Furthermore, \autoref{lem:splxv} indicates that $\rho_i(1)$ is proportional to
the computational power of player $i$.
\begin{lemma}\label{lem:splxv}
  \begin{align*}
    \frac{\rho_i(1)}{\rho(1)} = \frac{x_i}{x_1 + \cdots + x_m}.
  \end{align*}
\end{lemma}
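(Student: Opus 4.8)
The plan is to recognize the ratio $\rho_i(1)/\rho(1)$ as the probability that the smallest of $m$ independent exponential variables with rates $x_1,\dots,x_m$ occurs at index $i$. First I would record that on the hyperplane $\{\t\ge 0:\ t_1+\cdots+t_m=1\}$ every coordinate automatically satisfies $t_k\le 1$, so the box constraint implicit in $T_i\subseteq[0,1]^m$ is vacuous there and the level-$1$ slice of $T_i$ is exactly $C_i\cap\Delta$, where $\Delta=\{\t\ge 0:\ t_1+\cdots+t_m=1\}$ and $C_i=\{\t\ge 0:\ t_i/x_i\le t_j/x_j\ \text{for all }j\}$. The cones $C_1,\dots,C_m$ cover the nonnegative orthant and overlap only on the measure-zero set of ties $t_i/x_i=t_j/x_j$; hence, with $\sigma$ the $(m-1)$-dimensional surface measure on $\Delta$, we have $\rho_i(1)=\sigma(C_i\cap\Delta)$, $\rho(1)=\sigma(\Delta)$, and $\rho_i(1)/\rho(1)=\Pr[\t\in C_i]$ for $\t$ drawn uniformly on $\Delta$.

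The key step is the classical representation of a uniform point on $\Delta$: if $E_1,\dots,E_m$ are i.i.d.\ $\mathrm{Exp}(1)$ and $\Sigma=E_1+\cdots+E_m$, then $(E_1/\Sigma,\dots,E_m/\Sigma)$ is uniform on $\Delta$. Since dividing every coordinate by the common positive $\Sigma$ does not affect any comparison $t_i/x_i\le t_j/x_j$,
\[
  \frac{\rho_i(1)}{\rho(1)}=\Pr\!\Big[\,\frac{E_i}{x_i}\le\frac{E_j}{x_j}\ \text{for all }j\,\Big].
\]
Now $E_k/x_k\sim\mathrm{Exp}(x_k)$ and the $E_k/x_k$ are independent, so this equals $\int_0^\infty x_i e^{-x_i y}\prod_{j\ne i}e^{-x_j y}\,\ud y=x_i/(x_1+\cdots+x_m)$, which is the claim.

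For readers who prefer an exponential-free derivation, I would give the geometric alternative: because $C_i$ is a cone with apex at the origin, stacking its level-$z$ slices (each a $z$-scaled copy of the level-$1$ slice, the hyperplanes $\sum_k t_k=z$ being parallel with spacing $\ud z/\sqrt m$) gives $\rho_i(1)=m\sqrt m\cdot\mathrm{vol}_m(C_i\cap S)$ with $S=\{\t\ge 0:\ t_1+\cdots+t_m\le 1\}$, and likewise $\rho(1)=m\sqrt m\cdot\mathrm{vol}_m(S)$. The substitution $u_k=t_k/x_k$, followed by $v=u_i$ and $w_j=u_j-u_i$ for $j\ne i$, carries $C_i\cap S$ onto a corner simplex and evaluates $\mathrm{vol}_m(C_i\cap S)$ to $x_i/\big(m!\,(x_1+\cdots+x_m)\big)$; dividing by $\mathrm{vol}_m(S)=1/m!$ gives the same ratio.

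I do not expect a real obstacle: the statement is in essence the textbook identity ``$\arg\min$ of independent exponentials with rates $x_k$ is $i$ with probability $x_i/\sum_k x_k$.'' The only places that need a sentence of care are the reduction in the first paragraph --- that the ambient box is irrelevant at level $z=1$, that the ties form a null set, and that $\rho$ is genuinely $(m-1)$-dimensional surface measure so the uniform-on-$\Delta$ picture applies --- and, along the geometric route, the bookkeeping of the constant thickness factor $1/\sqrt m$ and the two Jacobians.
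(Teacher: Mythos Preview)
Your argument is correct and takes a genuinely different route from the paper. The paper proceeds purely geometrically: it observes that the hyperplanes $t_i/x_i=t_j/x_j$ all contain the line through the origin with direction $(x_1,\dots,x_m)$, so this line meets the simplex $A_1\cdots A_m=\{\t:\bar t=1\}$ at the single point $Q=\big(x_1/\sum_k x_k,\dots,x_m/\sum_k x_k\big)$; the slice $\{\t\in T_i:\bar t=1\}$ is then the sub-simplex $A_1\cdots A_{i-1}QA_{i+1}\cdots A_m$, and since $A_1\cdots A_m$ is regular, all the ``bases'' $A_1\cdots A_{i-1}A_{i+1}\cdots A_m$ have equal $(m-2)$-volume, so $\rho_i(1)/\rho_j(1)$ equals the ratio of the distances from $Q$ to those faces, namely $x_i/x_j$.

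What you do instead is recognize $\rho_i(1)/\rho(1)$ as the uniform-on-$\Delta$ probability of the cone $C_i$, represent the uniform point as normalized i.i.d.\ exponentials, and invoke the standard identity $\Pr[\arg\min_k E_k/x_k=i]=x_i/\sum_k x_k$. This is slicker and ties the lemma to a well-known fact; it also makes transparent why the computational powers enter only through their proportions. The paper's approach, on the other hand, is entirely self-contained (no probabilistic representation needed) and yields the explicit picture of each slice as a simplex with apex $Q$, which matches the two- and three-player figures the paper develops. Your geometric alternative via $\mathrm{vol}_m(C_i\cap S)$ is a third valid path; it is closer in spirit to the paper's but replaces the ``equal bases, compare heights'' step with an explicit change of variables.
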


\begin{proof}[Proof of \autoref{thm:probmulti}]
  By \autoref{lem:probmulti} and \autoref{lem:splxv}.
\end{proof}

We dedicate the rest of this section to prove \autoref{lem:probmulti} and
\autoref{lem:splxv}.

\paragraph{Proof idea of \autoref{lem:probmulti}}
The proof is done by a careful estimation of $p_i$ using a technique called the
{\em core-tail decomposition} \citep{li2013revenue}. To estimate the probability,
consider the core-tail decomposition of the set $T_i$, i.e., let
\[\core = \{\t : \bar t \leq \alpha\} \quad \text{and} \quad
\core(T_i) = \core \cap T_i,\]
where $\bar t = t_1 + \cdots + t_m$ and $\alpha > 0$ is a parameter to be
determined later. Similarly, define $\tail = [0, 1]^m \setminus \core$ and
$\tail(T_i) = \tail \cap T_i$.
Thus,
\begin{align}\label{eq:core-tail}
  p_i = \int_{\core(T_i)} f(\t) \ud \t + \int_{\tail(T_i)} f(\t) \ud \t.
\end{align}

The following two lemmas would be helpful on simplifying the analysis.
\begin{lemma}\label{lem:core}
  Let $g(\t) = n^m (1 - \bar t)^{n-1}$. If $\alpha < n^{-2/3}$, then
  \begin{align*}
    \int_{\core(T_i)} f(\t) \ud \t = \int_{\core(T_i)} g(\t) \ud \t
      + O(n^{m+1}\alpha^{m+2}).
  \end{align*}
\end{lemma}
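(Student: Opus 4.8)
The plan is to compare the true density $f(\t) = n^m \prod_{i=1}^m (1-t_i)^{n-1}$ against the surrogate $g(\t) = n^m(1-\bar t)^{n-1}$ on the core region $\core = \{\t : \bar t \le \alpha\}$, where $\bar t = t_1 + \cdots + t_m$. The key observation is that on the core, every coordinate satisfies $t_i \le \alpha$, so each factor $(1-t_i)^{n-1}$ is close to $1$ and the product is close to a single power; meanwhile $(1-\bar t)^{n-1}$ is also close to $1$. So both $f$ and $g$ are within a small multiplicative (or additive) factor of $n^m$ on the core, and the core itself has small volume — it is contained in the simplex $\{\t \ge 0 : \bar t \le \alpha\}$ of volume $\alpha^m/m!$. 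The error bound $O(n^{m+1}\alpha^{m+2})$ should come from: (volume of core) $\times$ $n^m$ $\times$ (relative gap between $f/n^m$ and $g/n^m$), where the relative gap is $O(n\alpha^2)$.

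First I would establish the pointwise estimate: for $\t \in \core$, write $\prod_i (1-t_i)^{n-1} = \exp\bigl((n-1)\sum_i \ln(1-t_i)\bigr)$ and $(1-\bar t)^{n-1} = \exp\bigl((n-1)\ln(1-\bar t)\bigr)$. Using $\ln(1-t_i) = -t_i - t_i^2/2 - \cdots$ and $\ln(1-\bar t) = -\bar t - \bar t^2/2 - \cdots$, the difference of exponents is $(n-1)\bigl(\sum_i \ln(1-t_i) - \ln(1-\bar t)\bigr)$; the linear terms cancel since $\sum_i t_i = \bar t$, leaving a quantity bounded in absolute value by $O\bigl(n(\bar t^2 + \sum_i t_i^2)\bigr) = O(n\alpha^2)$ because $\bar t \le \alpha$ and each $t_i \le \alpha$. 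Since $\alpha < n^{-2/3}$ forces $n\alpha^2 < n^{-1/3} \to 0$, the exponential difference is $e^{O(n\alpha^2)} - 1 = O(n\alpha^2)$, so $|f(\t) - g(\t)| = n^m \cdot O(n\alpha^2) = O(n^{m+1}\alpha^2)$ uniformly on the core.

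Then I would integrate this pointwise bound over $\core(T_i) \subseteq \core$: since $\mathrm{vol}(\core) \le \alpha^m/m! = O(\alpha^m)$ for fixed $m$, we get
\[
\left|\int_{\core(T_i)} f(\t)\,\ud\t - \int_{\core(T_i)} g(\t)\,\ud\t\right|
\le \mathrm{vol}(\core) \cdot O(n^{m+1}\alpha^2) = O(n^{m+1}\alpha^{m+2}),
\]
which is exactly the claimed error term. The main obstacle, and the part needing care rather than cleverness, is making the cancellation of the linear terms in the exponents fully rigorous and controlling the tail of the logarithm series uniformly — one must check that $\bar t \le \alpha < n^{-2/3} < 1$ keeps all the logarithms well-defined and their higher-order expansions geometrically convergent, so that the quadratic-and-beyond remainder is genuinely $O(n\alpha^2)$ with an absolute constant (depending only on $m$). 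A secondary point is confirming that the hypothesis $\alpha < n^{-2/3}$ (rather than merely $\alpha = n^{-1+\epsilon}$ as used later) is what guarantees $n\alpha^2 = o(1)$, so that $e^{O(n\alpha^2)} - 1$ is genuinely of the same order as its exponent.
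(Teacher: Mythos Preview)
Your proposal is correct and follows essentially the same route as the paper: a pointwise estimate showing $\prod_i(1-t_i)^{n-1}=(1-\bar t)^{n-1}+O(n\bar t^2)$ on the core (the paper expands the product directly rather than taking logarithms, but the cancellation of the linear term and the use of $\alpha<n^{-2/3}\Rightarrow n\alpha^2<n^{-1/3}$ are the same), followed by integrating that error over the core to pick up the extra factor $\alpha^m$. The only cosmetic difference is that the paper keeps the $\bar t$-dependence and integrates $\int_{\core}\bar t^2\,\ud\t$ via level-set slicing, whereas you replace $\bar t$ by $\alpha$ uniformly and multiply by $\mathrm{vol}(\core)$; both yield $O(\alpha^{m+2})$.
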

\begin{lemma}\label{lem:tail}
  \begin{align*}
    \int_{\tail(T_i)} f(\t) \ud \t \leq e^{-\alpha(n-1)+m\ln n}.
  \end{align*}
\end{lemma}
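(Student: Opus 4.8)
The plan is to bound the tail contribution $\int_{\tail(T_i)} f(\t)\,\ud\t$ crudely by enlarging the region of integration to all of $\tail$ and dropping the partition into winning sets. Concretely, since $T_i \subseteq [0,1]^m$ and the winning sets are disjoint, we have $\int_{\tail(T_i)} f(\t)\,\ud\t \le \int_{\tail} f(\t)\,\ud\t = \Pr[\bar t > \alpha]$, so it suffices to show that the probability that $t_1 + \cdots + t_m$ exceeds $\alpha$ is at most $e^{-\alpha(n-1) + m\ln n}$. Here each $t_i$ is independently distributed with density $f_i(t) = n(1-t)^{n-1}$ on $[0,1]$ (the index of the first solution among $n$ solutions), so I would work directly with this distribution.

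The key step is a moment-generating-function / Chernoff-type bound. First I would compute, or rather upper bound, $\E[e^{\lambda t_i}]$ for a suitable $\lambda > 0$. Since $t_i$ is stochastically dominated by an exponential-like variable — indeed $\Pr[t_i \ge s] = (1-s)^n \le e^{-ns}$ — one gets $\E[e^{\lambda t_i}] = \int_0^1 e^{\lambda t} n (1-t)^{n-1}\,\ud t \le \int_0^\infty e^{\lambda t} \cdot n e^{-(n-1)t}\,\ud t \cdot(\text{const})$, which for $\lambda < n-1$ is bounded by something like $n/(n-1-\lambda)$. Then by Markov's inequality applied to $e^{\lambda \bar t}$ and independence,
\begin{align*}
  \Pr[\bar t > \alpha] \le e^{-\lambda \alpha} \prod_{i=1}^m \E[e^{\lambda t_i}]
    \le e^{-\lambda\alpha}\left(\frac{n}{n-1-\lambda}\right)^m.
\end{align*}
Choosing $\lambda = n - 1 - $ (a small quantity), or more cleanly $\lambda$ so that $n/(n-1-\lambda) \le n$, i.e. taking $\lambda$ close to $n-1$, this becomes $\le e^{-\alpha(n-1)} \cdot n^m = e^{-\alpha(n-1) + m\ln n}$, which is exactly the claimed bound. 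I would need to check the algebra so the exponent comes out to precisely $-\alpha(n-1)+m\ln n$ rather than something slightly weaker; a natural alternative route avoiding MGF subtleties is to use the union-bound-flavored estimate $\Pr[\bar t > \alpha] \le \sum$ or a direct convolution bound, but the Chernoff approach with the $e^{-ns}$ tail domination is cleanest.

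The main obstacle I anticipate is getting the constant in the exponent exactly right: the MGF of $t_i$ is not literally that of an exponential, so I must be careful that the bound $\E[e^{\lambda t_i}] \le n/(n-1-\lambda)$ holds on the full interval $[0,1]$ (not $[0,\infty)$) and that pushing $\lambda \to n-1$ is legitimate — in fact at $\lambda = n-1$ the integral $\int_0^1 e^{(n-1)t} n(1-t)^{n-1}\,\ud t$ is finite and equals exactly $n\int_0^1 (e(1-t)/e^t)^{\,}\cdots$, which one can bound by $n$ using $e^{(n-1)t}(1-t)^{n-1} \le 1$ for $t \in [0,1]$ (since $(1-t)e^t \le 1$). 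That last elementary inequality, $(1-t)e^t \le 1$, is really the crux and makes everything collapse: $\E[e^{(n-1)t_i}] = n\int_0^1 [(1-t)e^t]^{n-1}\,\ud t \le n$, hence $\Pr[\bar t > \alpha] \le e^{-(n-1)\alpha}\, n^m = e^{-\alpha(n-1)+m\ln n}$, completing the proof.
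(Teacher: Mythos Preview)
Your proof is correct, and it rests on the same key inequality as the paper's argument --- namely $(1-t)e^t \le 1$, i.e.\ $1-t \le e^{-t}$ --- but the packaging is different. The paper avoids the MGF/Chernoff machinery entirely and instead bounds the integrand $f(\t)$ \emph{pointwise} on $\tail$: from AM--GM, $\prod_i(1-t_i) \le (1-\bar t/m)^m$, and then $1-x \le e^{-x}$ gives $\prod_i(1-t_i)^{n-1} \le e^{-\bar t(n-1)} \le e^{-\alpha(n-1)}$ for every $\t \in \tail$; hence $f(\t) \le n^m e^{-\alpha(n-1)}$, and since $\tail$ has volume at most $1$ the bound follows immediately. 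Your Chernoff step unwinds to exactly this pointwise estimate (your bound $\E[e^{(n-1)t_i}] \le n$ is just $n(1-t)^{n-1} \le n e^{-(n-1)t}$ integrated), so the probabilistic detour buys nothing extra here, though it is a perfectly valid route. Incidentally, the paper's AM--GM step is itself unnecessary: applying $1-t_i \le e^{-t_i}$ to each factor directly gives $\prod_i(1-t_i)^{n-1} \le e^{-(n-1)\bar t}$, which is essentially what your argument does.
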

We postpone the proofs of these lemmas to \autoref{sec:proof}, which are mainly
elaborate treatments of the Taylor series.

\begin{proof}[Proof of \autoref{lem:probmulti}]
  Plugging \autoref{lem:core} and \autoref{lem:tail} into \eqref{eq:core-tail},
  and letting $\alpha = \frac{(m+1)\ln n}{n - 1}$, we have that,
  \begin{align*}
    p_i - \int_{\core(T_i)} g(\t) \ud \t
      = O\left(n^{-1}(\ln n)^{m+2}\right)
      = \widetilde O(n^{-1}).
  \end{align*}
  Hence we remain to show the following equation:
  \begin{align*}
    \int_{\core(T_i)} g(\t) \ud \t
      = \frac{\rho_i(1)}{\rho(1)} \cdot n^m\int_0^{\alpha} (1-z)^{n-1} \rho(z) \ud z,
  \end{align*}
  or equivalently,
  \begin{gather}\label{eq:multi}\tag{$*$}
    \!
    \int_{\core(T_i)} (1 - \bar t)^{n-1} \ud \t
      = \frac{\rho_i(1)}{\rho(1)} \int_0^{\alpha} (1-z)^{n-1} \rho(z) \ud z.
  \end{gather}

  First of all, note that the left-hand-side of \eqref{eq:multi} could be
  rewritten as a double integration over $\core_z(T_i) = \{\t \in \core(T_i):
  \bar t = z\}$ and then for $z$ from $0$ to $\alpha$:
  \begin{align*}
    \int_{\core(T_i)} (1 - \bar t)^{n-1} \ud \t
      = \int_0^{\alpha} \ud z \int_L (1 - z)^{n-1} \ud \core_z(T_i)
      = \int_0^{\alpha} (1 - z)^{n-1} \ud z \int_L \ud \core_z(T_i),
  \end{align*}
  where $\int_L \ud S$ is the Lebesgue integration over a set $S$. In
  particular, $\int_L \ud \core_z(T_i)$ is the $(m-1)$-dimensional volume of the
  set $\core_z(T_i)$, denoted as $\rho_i(z)$.

  Then consider the ratio between $\rho_i(z)$ and the volume of set
  $\core_z(T)$, $\rho(z)$. Since both $\core(T_i)$ and $\core(T)$ are the
  intersections of the $\core$ and positive cones in $\R^m$, hence $\rho_i(z)$
  and $\rho(z)$ must be proportional to $z^{m-1}$ accordingly:
  \begin{align*}
    \rho_i(z) = z^{m-1}\rho_i(1), \quad \rho(z) = z^{m-1} \rho(1).
  \end{align*}

  Finally, we conclude that
  \begin{align*}
    \int_L \ud \core_z(T_i) = \rho_i(z) = \frac{\rho_i(1)}{\rho(1)} \cdot \rho(z),
  \end{align*}
  which then directly implies \eqref{eq:multi}.
\end{proof}

\paragraph{Proof idea of \autoref{lem:splxv}}
To smoothly proceed the analysis, we will start with the cases of $m = 2$ and
$m = 3$.

\begin{figure*}
  \centering
  \hfill%
  \begin{subfigure}[b]{0.305\textwidth}
    \centering\includegraphics[width=0.9\textwidth]{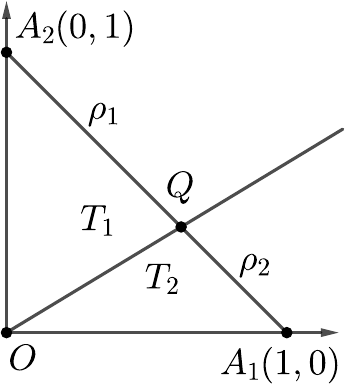}
    \caption{The two-player case.}\label{sfig:2simplex}
  \end{subfigure}\hfill%
  \begin{subfigure}[b]{0.400\textwidth}
    \centering\includegraphics[width=0.9\textwidth]{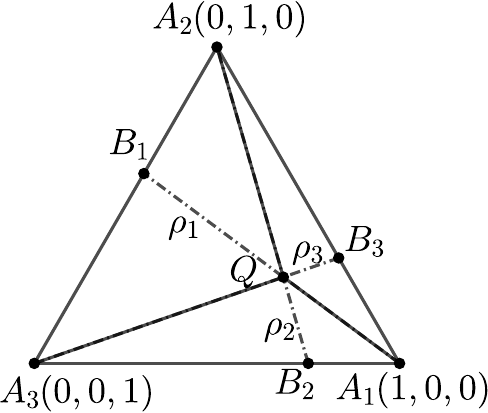}
    \caption{The three-player case.}\label{sfig:3simplex}
  \end{subfigure}\hfill.
  \caption{Simplexes for two-player and three-player cases.}
  \label{fig:simplex}
\end{figure*}

\bigskip
For the two-player case (see \autoref{sfig:2simplex}), by definition,
\begin{gather*}
  T_1 = \{\t \in [0, 1]^2 : t_1 / x_1 \leq t_2 / x_2\},  \\
  T_2 = \{\t \in [0, 1]^2 : t_2 / x_2 \leq t_1 / x_1\}.
\end{gather*}
Therefore, the formula of the hyperplane (in fact a line in the $2$-dimensional
case) separating $T_1$ and $T_2$ is $t_1 / x_1 = t_2 / x_2$ and $Q$ is the
intersection of this line and the simplex $A_1A_2$. In particular, the
$1$-dimensional volume of the set $\{\t \in T_1 : \bar t = 1\}$ is exactly the
length of the line segment $QA_2$, which equals to $\rho_1(1) = \sqrt2 x_1 /
(x_1 + x_2)$. Similarly, we get $\rho_2(1) = \sqrt2 x_2 / (x_1 + x_2)$.

\bigskip
For the three-player case (see \autoref{sfig:3simplex}), by definition,
\begin{gather*}
  T_1 = \{\t \in [0, 1]^3 : t_1 / x_1 \leq t_2 / x_2, t_1 / x_1 \leq t_3 / x_3\},  \\
  T_2 = \{\t \in [0, 1]^3 : t_2 / x_2 \leq t_1 / x_1, t_2 / x_2 \leq t_3 / x_3\},  \\
  T_3 = \{\t \in [0, 1]^3 : t_3 / x_3 \leq t_1 / x_1, t_3 / x_3 \leq t_2 / x_2\}.
\end{gather*}
In this case, there are three hyperplanes separating $T_1$, $T_2$, and $T_3$:
$t_1 / x_1 = t_2 / x_2$, $t_2 / x_2 = t_3 / x_3$, and $t_3 / x_3 = t_1 / x_1$.
Note that the intersection of the three hyperplanes is a line, $t_1 / x_1 = t_2
/ x_2 = t_3 / x_3$, which then intersects with the simplex $A_1A_2A_3 = \{\t:
t_1 + t_2 + t_3 = 1\}$ at a point $Q$.

Therefore, the $2$-dimensional volumes (or areas) $\rho_1(1)$, $\rho_2(1)$, and
$\rho_3(1)$ are exactly the areas of triangles $\triangle QA_2A_3$, $\triangle
A_1QA_3$, and $\triangle A_1A_2Q$, respectively. Note that $A_1A_2A_3$ is a
regular $2$-simplex, hence $A_2A_3 = A_3A_1 = A_1A_2$. Thus the ratios between
the areas of the triangles equal to the ratios between the heights of the
triangles, because their bases ($A_2A_3$, $A_3A_1$, $A_1A_2$) are of the same
length.

We conclude that
\begin{gather*}
  \frac{\rho_1(1)}{\rho_2(1)} = \frac{d(Q, A_2A_3)}{d(Q, A_3A_1)},
  \frac{\rho_1(1)}{\rho_3(1)} = \frac{d(Q, A_2A_3)}{d(Q, A_1A_2)},
\end{gather*}
where $d$ denotes the distance between a point and a line.

In the meanwhile, recall that the coordinate of point $Q$ is:
\[
  \left(\frac{x_1}{x_1 + x_2 + x_3}, \frac{x_2}{x_1 + x_2 + x_3},
  \frac{x_3}{x_1 + x_2 + x_3}\right),
\]
and with some primary calculation, we know that
\[
  \frac{d(Q, A_2A_3)}{d(Q, A_3A_1)} = \frac{x_1}{x_2} \quad \text{and} \quad
  \frac{d(Q, A_2A_3)}{d(Q, A_1A_2)} = \frac{x_1}{x_3}.
\]
In other words, $\rho_i(1) / \rho(1) = x_i / (x_1 + x_2 + x_3)$.

\bigskip
Finally, we generalize the observations above to get the proof.
\begin{proof}[Proof of \autoref{lem:splxv}]
  By the definition, $T_1, \ldots, T_m$ are separated by ${m \choose 2}$
  hyperplanes:
  \begin{align*}
    \forall i\neq j, t_i / x_i = t_j / x_j.
  \end{align*}
  In particular, the intersection of them is a line and the line intersects the
  $(m-1)$-simplex $A_1\cdots A_m$ at a unique point $Q$, whose coordinate is:
  \begin{align*}
    Q: \left(\frac{x_1}{x_1 + \cdots + x_m}, \ldots,
    \frac{x_m}{x_1 + \cdots + x_m}\right).
  \end{align*}

  In the meanwhile, note that for each $i \in [m]$, point $A_j \in T_i$ for all
  $j \neq i$. Therefore, the set $\{\t \in T_i : \bar t = 1\}$ is a
  $(m-1)$-simplex, $A_1\cdots A_{i-1}QA_{i+1}\cdots A_m$ and its volume is
  proportional to the $(m-2)$-dimensional volume of its base
  $A_1\cdots A_{i-1}A_{i+1}\cdots A_m$ times its height (the distance between
  point $Q$ and the base).

  Because $A_1\cdots A_m$ is a regular $(m-1)$-simplex, the volumes of the bases
  for each $i$ are the same. Therefore, $\rho_i(1) / \rho_j(1)$ equals to the
  ratios between the distances from $Q$ to the corresponding bases. Based on the
  coordinate of $Q$, we conclude that the ratio equals to $x_i / x_j$. In other
  words,
  \begin{align*}
    \frac{\rho_i(1)}{\rho(1)}
      = \frac{\rho_i(1)}{\rho_1(1) + \cdots + \rho_m(1)}
      = \frac{x_i}{x_1 + \cdots + x_m}.
  \end{align*}
\end{proof}

\section{Missing Proofs}\label{sec:proof}

The following technical lemma is repeatedly used in the proofs.
\begin{lemma}\label{lem:tech}
  For all $x \in \R$, $e^x \geq 1 + x$.
\end{lemma}

\begin{proof}[Proof of \autoref{lem:core}]
  For the $\core$ part, because $\bar t < \alpha < n^{-2/3}$,
  \begin{gather*}
    \prod_{i=1}^m (1-t_i)
      = 1 - \sum_i t_i + \sum_{i \neq j} t_i t_j + \cdots
      = 1 - \bar t + O(\bar t^2),  \\
    \prod_{i=1}^m (1-t_i)^{n-1}
      = (1 - \bar t)^{n-1}\left(1 + \frac{O(\bar t^2)}{1 - \bar t}\right)^{n-1}.
  \end{gather*}
  Since $n \bar t^2 \leq n \alpha^2 < n^{-1/3}$, then by \autoref{lem:tech} and
  Taylor's expansion of $e^x$ at $0$,
  \begin{align*}
    \prod_{i=1}^m (1-t_i)^{n-1}
      \leq (1 - \bar t)^{n-1} e^{\frac{O(\bar t^2)}{1 - \bar t} \cdot (n - 1)}
    ={}(1 - \bar t)^{n-1}
      \cdot \left(1 + \frac{O(\bar t^2)(n - 1)}{1 - \bar t} + O(n^2\bar t^4)\right)
    ={} (1 - \bar t)^{n-1} + O(n\bar t^2).
  \end{align*}
  Thus
  \begin{align*}
    \int_{\core(T_i)} f(\t) \ud \t
      \leq \int_{\core(T_i)} n^m\left((1 - \bar t)^{n-1} + O(n\bar t^2)\right) \ud \t,
  \end{align*}
  where
  \begin{align*}
    \int_{\core(T_i)} n^m \cdot O(n\bar t^2) \ud \t
      \leq n^{m+1} \int_\core O(\bar t^2) \ud t.
  \end{align*}
  Consider rewrite the right-hand-side as a double integration over $\core_{z} =
  \{\t \in \core : \bar t = z\}$ and then for $z$ from $0$ to $\alpha$:
  \begin{align*}
    \int_\core O(\bar t^2) \ud t
      = \int_0^\alpha O(z^2) \ud z \int_L \ud {\core_{z}}
  \end{align*}

  Note that $\int_L \ud {\core_{z}}$ is the Lebesgue integration over the set
  $\core_{z}$ and hence equals to its $(m-1)$-dimensional volume, which is
  $\rho(z) = \frac{\sqrt{m}}{(m-1)!} \cdot z^{m-1}$ \citep{stein1966note}.
  Therefore,
  \begin{align*}
    \int_\core O(\bar t^2) \ud t
      = \int_0^\alpha O(z^2) \cdot z^{m-1} \ud z = O(\alpha^{m+2}).
  \end{align*}
  That is
  \begin{align*}
    \int_{\core(T_i)} n^m \cdot O(n\bar t^2) \ud \t \leq O(n^{m+1}\alpha^{m+2}).
  \end{align*}
  In summary,
  \begin{align*}
    \int_{\core(T_i)} f(\t) \ud \t
      = \int_{\core(T_i)} g(\t) \ud \t + O(n^{m+1}\alpha^{m+2}).
  \end{align*}
\end{proof}

\begin{proof}[Proof of \autoref{lem:tail}]
  For the $\tail$ part, according to the inequality of arithmetic and geometric
  means,
  \begin{align*}
    \prod_{i=1}^m (1-t_i) \leq (1 - \bar t / m)^m,
  \end{align*}
  then by \autoref{lem:tech} and $\bar t \geq \alpha$,
  \begin{align*}
    \prod_{i=1}^m (1-t_i)^{n-1} \leq (1 - \bar t / m)^{m(n-1)}
      \leq e^{-\bar t(n - 1)} \leq e^{-\alpha(n - 1)}.
  \end{align*}
  Therefore, $f(\t) \leq n^m e^{-\alpha(n - 1)} = e^{-\alpha(n - 1) + m \ln n}$.
  Also note that $\int_{\tail} \ud \t$ is the volume of $\tail$, which is less
  than $1$, hence
  \begin{align*}
    \int_{\tail(T_i)} f(\t) \ud \t
      \leq e^{-\alpha(n - 1) + m \ln n} \int_{\tail} \ud \t
      \leq e^{-\alpha(n - 1) + m \ln n}.
  \end{align*}
\end{proof}

\bibliographystyle{apalike}
\balance
\bibliography{mining}

\end{document}